\newtheorem{thm}{thm}[section]
\newtheorem{prop}[thm]{Proposition}
\newtheorem{lem}[thm]{Lemma}
\newtheorem{cor}[thm]{Corollary}
\newtheorem{theorem}[thm]{Theorem}
\newtheorem{ex}{Example}
\newtheorem{defn}{Definition}
\font\msbm=msbm10 at 12pt
\newcommand{\ZZ}{\mbox{\msbm Z}}
\newcommand{\FF}{\mbox{\msbm F}}
\newcommand{\F}{\mbox{\msbm F}}
\def\x{\mathbf{x}}
\def\y{\mathbf{y}}
\def\v{\mathbf{v}}
\def\vu{\mathbf{u}}
\def\vw{\mathbf{w}}
\def\vc{\mathbf{c}}
\def\1v{\mathbf{1}}
\def\0v{\mathbf{0}}
\begin{document}

\title{$\Theta_S-$cyclic codes over $A_k$}
\author{Irwansyah, Aleams Barra \\ {\it \small Institut Teknologi Bandung} \\ {\it \small Bandung, Indonesia} \\
Steven T. Dougherty \\
{\it \small University of Scranton} \\
{\it\small Scranton, PA, USA} \\
Ahmad Muchlis, Intan Muchtadi-Alamsyah \\ {\it\small Institut Teknologi Bandung} \\ {\it\small Bandung, Indonesia} \\
Patrick Sol\'e \\
{\it\small Telecom Paris Tech, Paris, France}\\
{\it\small and} \\ {\it\small University King Abdul Aziz, Jeddah, Saudi Arabia}\\
Djoko Suprijanto\footnote{Corresponding author. email: djoko@math.itb.ac.id} \\ {\it\small Institut Teknologi Bandung} \\ {\it\small Bandung, Indonesia} \\
Olfa Yemen \\ {\it\small Institut Pr\'{e}paratoire Aux Etudes D'Ing\'{e}nieurs} \\ {\it\small El Manar, Tunis, Tunisia} }

\maketitle


\begin{abstract}
We study $\Theta_S-$cyclic codes over the family of rings $A_k.$
We characterize $\Theta_S-$cyclic codes in terms of their binary images.
A family of Hermitian inner-products is defined and we prove that if a code
is $\Theta_S-$cyclic then its Hermitian dual is also $\Theta_S-$cyclic.
Finally, we give constructions of $\Theta_S-$cyclic codes.
\vspace{1cm}

\noindent{\bf Key Words}: Skew-cyclic codes; codes over rings.
\end{abstract}

\section{Introduction}

Codes over rings are a widely studied object.  At the heart of this subject is
the construction of distance preserving maps from rings to the
 binary Hamming space.  Initially, the four rings of order 4, $\FF_4$,
 $\ZZ_4$, $\FF_2 + u \FF_2, u^2=0$ and $\FF_2 + v \FF_2, v^2=v$ were studied
with respect to their distance preserving maps.  See \cite{TypeIV},
for a complete description of codes over these rings.  The finite field $\FF_4$
and the ring $\ZZ_4$ are part of the well known families of finite fields and
integer modulo rings.  Codes over these rings have been well studied.
The ring $\FF_2 + u \FF_2$ has been generalized to the family of rings $R_k$.
See \cite{Dough1}, \cite{Dough2} and \cite{Dough2a} for a description of codes over
these rings.
Codes over the ring $\FF_2 + v \FF_2$  have been studied in numerous papers,
see  \cite{Nuh}, \cite{Dertli} and \cite{TypeIV} for example.
This family of rings has been generalized to the family of rings $A_k$ in \cite{C-D-D}.

Cyclic codes have long been one of the most interesting families
of codes because of their rich algebraic structure.  Namely, they can be
viewed as ideals in a polynomial ring. This connection allows for a classification of
cyclic codes by classifying ideals in a polynomial ring moded out by $x^n-1.$
In \cite{Boucher1}, skew cyclic codes were described as a generalization of cyclic codes.  This notion
was applied to codes over $\FF_2 + v \FF_2$ in \cite{Nuh}. In this work, they  described
generator polynomials of $\theta-$cyclic codes defined over this ring as well as the generator polynomials of their duals
with respect to both the Euclidean and the Hermitian inner product.  They also provided some examples of optimal
$\theta-$cyclic self-dual codes with respect to the Euclidean and Hermitian inner product.

Recently, in \cite{gao}, Gao further generalized previous work in  \cite{Nuh} and \cite{Boucher1} to
$\theta-$cyclic codes over the ring $\mathbb{F}_p+v\mathbb{F}_p,$ with $p$ prime.  Among his results
 is that $\theta-$cyclic codes over the ring $\mathbb{F}_p+v\mathbb{F}_p$ is equivalent
to either cyclic codes or quasi-cyclic codes over the same ring.  We note in \cite{Siap},  Siap et.al.
have proven the same result earlier for the case of $\theta-$cyclic codes over the field $\mathbb{F}_q.$

In this work, we study skew cyclic codes over the family of rings $A_k$ and we study
their images via a recursively defined  distance preserving map into  the binary Hamming space.

\section{Definitions and Notations}
\subsection{Family of Rings}

In \cite{C-D-D}, the following family of rings, which are a generalization of the ring $\FF_2+v\FF_2$, were defined.
 For the integer $k \geq 1$, let $$A_k = \FF_2[v_1,v_2,\dots,v_k] / \langle v_i^2 = v_i, v_iv_j = v_jv_i \rangle. $$
The rings in this family of rings are  finite commutative rings with cardinality $2^{2^k}$ and characteristic 2.

We shall describe a notation for any element in the ring $A_k$. Let $k$ be an integer with $k \geq 1.$
Let $B \subseteq  \{1,2,\dots,k \}$ and let $v_B = \prod_{i \in B} v_i.$
In particular, $v_{\emptyset} =1.$
It follows that  each element of $A_k$  is of the form
$\sum_{B \in {\cal P}_k} \alpha_B v_B$ where $\alpha_B \in \FF_2,$ and ${\cal P}_k$
is the power set of the set $ \{1,2, \dots, k  \}.$
For $A,B \subseteq  \{1,2,\dots,k \}$ we have that $v_Av_B = v_{A \cup B}$ which gives that
$$
\sum_{B \in {\cal P}_k } \alpha_B v_B  \cdot \sum_{C \in {\cal P}_k }  \beta_C v_C = \sum_{D \in {\cal P}_k}
\left( \sum_{B \cup C = D} \alpha_B \beta_C \right) v_D.
$$

It is shown, in \cite{C-D-D}, that    the  only unit in the ring $A_k$ is 1.
It is also shown that the
ideal $\langle w_1,w_2, \dots,w_k \rangle $, where $w_i  \in  \{  v_i  , 1+v_i \}$, is a maximal ideal of cardinality $2^{2^k-1}$.
Note that this gives $2^k$ maximal ideals.  Hence, except for the case when $k=0$,
namely the finite field of order 2, the ring is not a local ring.

The ring $A_k$ is a principal ideal ring.
In particular,
let $I = \langle \alpha_1,\alpha_2,\dots,\alpha_s \rangle$ be an ideal in $A_k$, then
$I$ is a principal ideal generated by the element which is the sum of all non-empty products of the $\alpha_i$, that is
$$
I = \left \langle  \sum_{\substack{ A \subseteq  \{1,2,\dots, s\}, \\
A \neq \emptyset } } \prod_{i \in A} \alpha_i       \right \rangle.
$$

 Let $S$ be a subset of $\{1,2,\dots, k \}$.   We shall define  a set of
 automorphisms in the ring $A_{k}$ based on the set $S$.  Define
the map $\Theta _{i}$ by
\begin{equation*}
\Theta _{i}(v_{i})=v_{i}+1\hspace{1cm}\text{and}\hspace{1cm}
\Theta _{i}(v_{j})=v_{j},\hspace{0.5cm}\forall j\neq i.
\end{equation*}%

For all $\ S \subseteq  \{ 1, 2, \dots, k \}$ the
automorphism $\Theta _{S}$ is defined by:
\begin{equation*}
\Theta_{S}=\prod \limits_{i \in S}\Theta _{i}.
\end{equation*}
Note that $\Theta_S$ is an involution on the ring $A_k$.
We shall use this involution to define $\Theta_S-$cyclic codes.

In the ambient space $A_k^n$ we have two natural families of  inner-products.
First we have the standard  Euclidean inner-product:
$ [ \vw,\vu] = \sum w_j u_j .$

Now we define the Hermitian inner-product with respect  to a subset $T \subseteq \{1,2, \dots, k \}.$
Define
$$ [ \vw,\vu]_{H_T} = \sum w_j  \Theta_T ({u_j}). $$

We have orthogonals corresponding to each inner-product.
We define $C^\perp =  \{ \vw \ | \ [\vw,\vu]=0, \ \forall \vu \in C\}$
and
$C^{H_T} =  \{ \vw \ | \ [\vw,\vu]_{H_T}=0, \ \forall \vu \in C\}.$
Note that there are $2^k$ possible Hermitian duals for codes over $A_k$.    Moreover, we notice that if $T= \emptyset$
the Hermitian orthogonal is in fact the Euclidean inner-product.  We retain the Euclidean orthogonal notation because of
 its importance as a particular example of orthogonals.

  Since the ring $A_k$  is a Frobenius ring,  in both cases we have the standard cardinality condition, namely
$|C| |C^\perp| = |A_k^n|$ and $|C| |C^{H_T}| = |A_k^n|.$ Furthermore,
we also have several properties of the ring $A_k$ as follows.  See  \cite{C-D-D}, for  foundational material on these rings.

\begin{prop} \label{prop2.1}
If $I$ is a maximal ideal in $A_k$, then $I=\langle w_1,w_2,\dots,w_k\rangle$, where $w_i\in\{v_i,v_i+1\},$
for $1 \leq i \leq k.$
\end{prop}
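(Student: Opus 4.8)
The plan is to show that any maximal ideal $I$ of $A_k$ must contain, and hence coincide with, one of the ideals $\langle w_1, \dots, w_k \rangle$ that are already known to be maximal. The leverage comes from two facts recalled above: that each $v_i$ is an idempotent (since $v_i^2 = v_i$), and that every ideal of the stated form $\langle w_1,\dots,w_k\rangle$ with $w_i\in\{v_i,v_i+1\}$ is a maximal ideal of cardinality $2^{2^k-1}$. The idea is to detect, for each index $i$, whether $v_i$ or $v_i+1$ lies in $I$, assemble these elements as generators, and then compare the resulting ideal with $I$.

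First I would pass to the quotient $A_k/I$. Since $I$ is maximal and $A_k$ is commutative, $A_k/I$ is a field. Next, for each $i$ I would examine the image $\overline{v_i}$ of $v_i$ in this field. From $v_i^2 = v_i$ we obtain $\overline{v_i}^2 = \overline{v_i}$, so $\overline{v_i}(\overline{v_i}+1) = 0$ in $A_k/I$; because a field has no zero divisors and $A_k$ has characteristic $2$, this forces $\overline{v_i} = 0$ or $\overline{v_i} = 1$, that is, $v_i \in I$ or $v_i + 1 \in I$. For each $i$ I would then set $w_i \in \{v_i,\, v_i+1\}$ to be whichever of these two elements lies in $I$. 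This choice is unambiguous: since $0 \neq 1$ in the field $A_k/I$, the elements $v_i$ and $v_i+1$ cannot both belong to $I$.

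Finally, since every $w_i$ lies in $I$ by construction, we have $\langle w_1, \dots, w_k \rangle \subseteq I$. By the facts recalled above, $\langle w_1, \dots, w_k \rangle$ is itself a maximal ideal, while $I$ is proper; a maximal ideal contained in a proper ideal must equal it, so $I = \langle w_1, \dots, w_k \rangle$, which has exactly the asserted form. I expect no serious obstacle here: the only points requiring care are the reduction of each idempotent to $0$ or $1$ in the quotient field and the well-definedness of the choice of $w_i$, after which the already-established maximality of $\langle w_1, \dots, w_k \rangle$ upgrades the containment to an equality with no further computation.
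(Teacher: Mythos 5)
Your proof is correct, and it takes a genuinely different route from the paper's. The paper starts from the fact that $A_k$ is a principal ideal ring, writes $I=\langle\omega\rangle$ with $\omega=\sum_B\alpha_Bv_B$, and then runs an induction on the cardinality of a subset $B'$ with $\alpha_{B'}=1$, using explicit identities such as $v_{B'}+1=(v_{j_m}+1)(v_{B'}+1)+v_{j_m}(v_{B'\setminus\{j_m\}}+1)$ to place $\omega$ inside some $\langle w_1,\dots,w_k\rangle$. You instead pass to the quotient field $A_k/I$ and use the idempotency $v_i^2=v_i$ to conclude $\overline{v_i}\in\{0,1\}$, hence $v_i\in I$ or $v_i+1\in I$ for each $i$ (and not both, since their sum is $1$); this gives the containment $\langle w_1,\dots,w_k\rangle\subseteq I$ directly, with no induction and no appeal to the principal ideal structure. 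Both arguments finish identically, invoking the previously established maximality of $\langle w_1,\dots,w_k\rangle$ to upgrade the containment to an equality. Your version is shorter and arguably more transparent, working for any commutative ring generated by commuting idempotents over $\FF_2$; the paper's version is more computational but exhibits explicitly how a generator of $I$ decomposes in terms of the $w_i$, which is in keeping with its emphasis elsewhere on explicit generators of ideals in $A_k$.
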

\begin{proof}
Since $A_k$ is a principal ideal ring, we have $I=\langle \omega\rangle$, for some $\omega\in A_k$. Let
\[
\omega=\sum_{B\in {\cal P}_k} \alpha_B v_B
\]
for some $\alpha_B\in\mathbb{F}_2,$ $v_B=\prod_{i\in B}v_i,$ and $v_\emptyset = 1.$

If $\alpha_{\emptyset}=0$, then clearly $\omega\in \langle v_1,v_2,\dots,v_k\rangle$.
On the other hand, if $\alpha_{\emptyset}=1,$ then let
\[
\omega=1+v_{B'}+\sum_{B\in {\cal P}_k\backslash \{B^\prime, \emptyset\}}\alpha_B v_B
\]
where $B^\prime \in {\cal P}_k$ such that $\alpha_{B^\prime}=1.$
We use mathematical induction on the cardinality of $B^\prime$.
If $B^\prime=\{j_1\}$, then $\omega \in \langle v_{j_1}+1,v_{j_2},\dots,v_{j_k} \rangle.$
 If $B^\prime=\{j_1,j_2\},$ then consider
\begin{align*}
v_{B^\prime}+1 &= (v_{j_1}+1)(v_{B^\prime}+1)+v_{j_1}(v_{B^\prime \backslash \{j_1\}}+1)\\
               &= (v_{j_1}+1)(v_{j_1}v_{j_2}+1)+v_{j_1}(v_{j_2}+1).
\end{align*}
As a consequence, $\omega \in \langle v_{j_1}+1,v_{j_2}+1,v_{j_3},\ldots,v_{j_k} \rangle.$
Moreover, if $B^\prime=\{j_1,j_2,j_3\},$ then
\[
v_{B^\prime}+1=(v_{j_1}+1)(v_{j_1}v_{j_2}v_{j_3}+1)+v_{j_1}(v_{j_2}v_{j_3}+1).
\]
By the previous equation we have
\[
v_{B^\prime}+1=(v_{j_1}+1)(v_{j_1}v_{j_2}v_{j_3}+1)+v_{j_1}\left((v_{j_2}+1)(v_{j_2}v_{j_3}+1)+v_{j_2}(v_{j_3}+1)\right),
\]
consequently, $\omega\in \langle v_{j_1}+1,v_{j_2}+1,v_{j_3}+1, v_{j_4},\ldots,v_{j_k}\rangle.$
Now, assume that the above equation is
satisfied when $|B^\prime|=m-1$, then when $|B^\prime|=m,$ we have
\[
v_{B^\prime}+1=(v_{j_m}+1)(v_{B^\prime}+1)+v_{j_m}(v_{B^\prime \backslash \{j_m\}}+1).
\]
This implies that $\omega\in J=\langle v_{j_1}+1,\dots,v_{j_m}+1,v_{j_{m+1}},\dots,v_{j_k}\rangle.$
Since $J$ and $I$ are maximal ideals, and  $I\subseteq J,$ then we have that $I=J.$
\end{proof}
\begin{lem}
 The ring $A_k$ can be viewed as an $\mathbb{F}_2$-vector space with dimension $2^k$ whose basis consists of elements of the form
$\prod_{i\in B}w_i$, where $B \in {\cal P}_k$ and $w_i\in\{v_i,v_i+1\}.$
\label{lemavector}
\end{lem}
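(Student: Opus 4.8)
The plan is to compare the proposed set $\{\prod_{i\in B} w_i : B \in {\cal P}_k\}$, obtained by fixing for each index $i$ a single element $w_i \in \{v_i, v_i+1\}$, against the standard spanning set $\{v_B : B \in {\cal P}_k\}$ already introduced, and to exhibit the change of generating set as a triangular (hence invertible) transformation. Since the excerpt shows that every element of $A_k$ is of the form $\sum_B \alpha_B v_B$ with $\alpha_B \in \FF_2$ and that $|A_k| = 2^{2^k}$, the $v_B$ form an $\FF_2$-basis and $\dim_{\FF_2} A_k = 2^k$; it therefore suffices to prove that the $2^k$ products $\prod_{i\in B} w_i$ are $\FF_2$-linearly independent.

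First I would write $w_i = v_i + \epsilon_i$ with $\epsilon_i \in \{0,1\}$ (so $\epsilon_i = 0$ when $w_i = v_i$ and $\epsilon_i = 1$ when $w_i = v_i+1$) and expand the product. Because each variable occurs at most once in $\prod_{i\in B} w_i$, the relation $v_i^2 = v_i$ never intervenes, and distributing gives
\[
\prod_{i\in B} w_i = \prod_{i\in B}(v_i + \epsilon_i) = \sum_{C \subseteq B}\Big(\prod_{i\in B\setminus C}\epsilon_i\Big) v_C = v_B + \sum_{C \subsetneq B} \gamma_{B,C}\, v_C
\]
for suitable $\gamma_{B,C}\in \FF_2$. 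The key structural feature is that the coefficient of $v_B$ is the empty product $1$, while every other monomial that appears is $v_C$ with $C$ a proper subset of $B$.

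Next I would read off linear independence from this triangular shape. Fix a total order on ${\cal P}_k$ refining inclusion, for instance by ordering the subsets by cardinality and breaking ties arbitrarily, so that $C \subsetneq B$ forces $C$ to precede $B$. Then the matrix expressing $\big(\prod_{i\in B} w_i\big)_B$ in terms of $(v_C)_C$ is unitriangular over $\FF_2$, hence invertible, and so the products form a basis. Equivalently, in a hypothetical relation $\sum_B \lambda_B \prod_{i\in B} w_i = 0$ I would pick $B$ of maximal cardinality with $\lambda_B \neq 0$; by the expansion, $v_B$ occurs only in those terms $\prod_{i\in B'} w_i$ with $B \subseteq B'$, and maximality forces $B' = B$, so the $v_B$-coefficient of the whole sum equals $\lambda_B$, contradicting $\lambda_B \neq 0$.

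I do not expect a genuine obstacle here: the argument reduces to the observation that the transition matrix between the two generating sets is unitriangular. The only points demanding care are fixing the interpretation of the statement (one choice of $w_i$ per index $i$, yielding exactly $2^k$ products rather than all $3^k$ combinations), confirming that $v_i^2 = v_i$ plays no role in the expansion, and handling the constant term $B=\emptyset$, where the empty product is $v_\emptyset = 1$.
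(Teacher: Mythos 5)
Your proof is correct and takes essentially the same route as the paper: both treat the $v_B$ as the standard $\mathbb{F}_2$-basis of size $2^k$ and then deduce that the products $\prod_{i\in B}w_i$ form a basis as well. Your unitriangular-expansion argument actually supplies the justification for the linear independence of the $w_B$ that the paper's final sentence merely asserts, so your version is the more complete one.
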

\begin{proof}
 Every element $a\in A_k$ can be written as $a=\sum_{B\in {\cal P}_k}\alpha_Bv_B$, for some $\alpha_B\in\mathbb{F}_2$,
 $v_B=\prod_{i\in B} v_i$ and $v_\emptyset=1$. Therefore, $A_k$ is a vector space over $\mathbb{F}_2$ with a basis that consists
of elements of the form $v_B=\prod_{i\in B} v_i$, where $v_\emptyset=1.$ There are $\sum_{j=0}^k{k\choose j}=2^k$
basis elements. Since all $v_B$
are linearly independent over $\mathbb{F}_2$,
so is the element $w_B=\prod_{i\in B}w_i$, where $w_i\in\{v_i,1+v_i\}$.
\end{proof}

As a direct consequence of the above proposition and \cite[Theorem 2.3]{C-D-D}, we have the following theorem.
\begin{theorem}
 An ideal $I$ in $A_k$ is maximal if and only if $I=\langle w_1,w_2,\dots,w_k\rangle.$
\label{teoideal}
\end{theorem}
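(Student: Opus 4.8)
The plan is to treat the statement as a biconditional and establish the two implications separately, drawing on results already available. For the forward implication—that every maximal ideal $I$ of $A_k$ has the form $\langle w_1, w_2, \dots, w_k \rangle$ with $w_i \in \{v_i, v_i+1\}$—I would simply cite Proposition \ref{prop2.1}, which has just been proven above by induction on the cardinality of $B'$. No additional argument is needed for this direction, since that proposition is precisely the ``only if'' half of the theorem.

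For the converse, I would need to show that any ideal of the prescribed form $\langle w_1, w_2, \dots, w_k \rangle$ with $w_i \in \{v_i, v_i+1\}$ is in fact maximal. This is exactly the assertion recorded in Section 2, namely that such an ideal is a maximal ideal of cardinality $2^{2^k-1}$, which is the content of \cite[Theorem 2.3]{C-D-D}. I would therefore invoke that cited result directly to supply the ``if'' half.

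Combining the two directions then yields the claimed biconditional, so the theorem is a direct corollary of Proposition \ref{prop2.1} together with \cite[Theorem 2.3]{C-D-D}. There is no genuine obstacle here: the substantive work was already carried out in the inductive manipulation of $v_{B'}+1$ inside the proof of Proposition \ref{prop2.1} (for the forward implication) and in the external reference (for the converse). If one instead wished to make the converse self-contained rather than quoting \cite{C-D-D}, the only nontrivial step would be to verify directly that $\langle w_1, \dots, w_k \rangle$ has index $2$ in $A_k$—equivalently, that the quotient $A_k / \langle w_1, \dots, w_k \rangle$ is isomorphic to $\FF_2$—which, given the ambient cardinality $|A_k| = 2^{2^k}$ and the stated ideal cardinality $2^{2^k-1}$, immediately forces maximality.
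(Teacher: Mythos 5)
Your proposal is correct and matches the paper's own treatment exactly: the paper derives this theorem as a direct consequence of Proposition \ref{prop2.1} (for the ``maximal $\Rightarrow$ of the stated form'' direction) together with \cite[Theorem 2.3]{C-D-D} (for the converse). Your additional remark on how to make the converse self-contained via the cardinality count $|A_k|/|\langle w_1,\dots,w_k\rangle| = 2$ is a sensible supplement but not needed to reproduce the paper's argument.
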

The following proposition gives a characterization of automorphisms in $A_k.$
\begin{prop}
 Let $\theta$ be an endomorphism in $A_k$. The map $\theta$ is an automorphism
if and only if $\theta(v_i)=w_j$, where $w_j\in\{v_j,v_j+1\}$, for every $i\in \{1,2,\ldots,k\},$
$\theta(v_i)\not= \theta(v_j)$ when $i\not=j,$
and $\theta(a)=a,$ for every $a\in\mathbb{F}_2.$
\label{auto}
\end{prop}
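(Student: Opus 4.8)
The plan is to treat the two implications separately, first recording the structural fact that $A_k$ is a Boolean ring: every element satisfies $a^2=a$, since $\left(\sum_{B\in{\cal P}_k}\alpha_B v_B\right)^2=\sum_{B\in{\cal P}_k}\alpha_B^2 v_B^2=\sum_{B\in{\cal P}_k}\alpha_B v_B$ in characteristic $2$. In particular the requirement that a ring endomorphism send an idempotent to an idempotent imposes nothing on $\theta(v_i)$ by itself, so in both directions the real content is bijectivity rather than the algebra relations.

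For the ``if'' direction I would proceed as follows. Suppose $\theta(v_i)=w_{\sigma(i)}$ with $w_{\sigma(i)}\in\{v_{\sigma(i)},v_{\sigma(i)}+1\}$, with $i\mapsto\sigma(i)$ injective (hence a permutation of $\{1,\dots,k\}$ by the condition $\theta(v_i)\neq\theta(v_j)$), and $\theta$ fixing $\FF_2$. Since each $w_{\sigma(i)}$ is idempotent and the only defining relations of $A_k$ are $v_i^2=v_i$ together with commutativity, this assignment extends to a well-defined ring endomorphism. To see that it is an automorphism I would appeal directly to Lemma \ref{lemavector}: $\theta$ carries the $\FF_2$-basis $\{v_B=\prod_{i\in B}v_i : B\in{\cal P}_k\}$ to $\{\prod_{i\in B}w_{\sigma(i)} : B\in{\cal P}_k\}$, and since $\sigma$ permutes ${\cal P}_k$ this image is precisely a basis of the type guaranteed by Lemma \ref{lemavector}. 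Sending a basis to a basis, the $\FF_2$-linear map $\theta$ is bijective, hence an automorphism.

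For the ``only if'' direction I would begin from the ideal-theoretic description of $A_k$. As a bijection, $\theta$ permutes the $2^k$ maximal ideals, so by Theorem \ref{teoideal} it carries each $\langle w_1,\dots,w_k\rangle$ to another ideal of the same shape; dually, it permutes the $2^k$ primitive idempotents $\prod_{i}w_i$. The target is to upgrade this to the much sharper statement that each individual generator $v_i$ lands in $\{v_j,v_j+1\}$. The natural device is that each $w_j$ is distinguished among idempotents: $v_i$ is the sum of exactly those primitive idempotents indexed by subsets containing $i$, i.e.\ exactly $2^{k-1}$ of the $2^k$ of them, and $v_i+1$ is the complementary half. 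I would try to show that the permutation of primitive idempotents induced by $\theta$ must respect this ``coordinate slab'' structure, and then read off $\theta(v_i)$ together with the injectivity of $i\mapsto\sigma(i)$ from the multiplication rule $v_Av_B=v_{A\cup B}$.

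This final step is exactly where I expect the main obstacle to lie, and where the argument must be most careful. The bijection on maximal ideals only tells us that $\theta$ induces \emph{some} permutation of the $2^k$ primitive idempotents, and an arbitrary such permutation need not send a half-space slab to another slab; so the passage from ``$\theta$ permutes maximal ideals'' to ``$\theta(v_i)\in\{v_j,v_j+1\}$'' does not follow from Proposition \ref{prop2.1} alone and is the genuinely delicate point. I would therefore concentrate the effort here, combining Lemma \ref{lemavector} with the explicit product formula to constrain how products of generators transform under $\theta$, and using the distinctness hypothesis $\theta(v_i)\neq\theta(v_j)$ to exclude collisions among the images.
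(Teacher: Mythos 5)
Your ``if'' direction is correct and is essentially the paper's argument: the assignment extends to a ring endomorphism because the images are commuting idempotents, and by Lemma~\ref{lemavector} it sends the $\mathbb{F}_2$-basis $\{v_B\}$ to another basis of the form $\{\prod_{i\in B}w_i\}$, hence is bijective. The ``only if'' direction, however, is not a proof: you describe a strategy (pass to the induced permutation of the $2^k$ primitive idempotents and show it must preserve the ``coordinate slab'' structure) and then explicitly leave open the one step that carries all the content, namely upgrading ``$\theta$ permutes the maximal ideals'' to ``$\theta(v_i)\in\{v_j,v_j+1\}$.'' That is a genuine gap. For comparison, the paper shows $J_\theta=\langle\theta(v_1),\dots,\theta(v_k)\rangle$ is maximal, writes $J_\theta=\langle w_1,\dots,w_k\rangle$ by Theorem~\ref{teoideal}, equates the canonical single generators $\sum_{B\neq\emptyset}\prod_{i\in B}w_i=\sum_{B\neq\emptyset}\prod_{i\in B}\theta(v_i)$, and from this equality concludes ``therefore $\theta(v_i)=w_j$.'' That final inference is precisely the slab-structure claim you flagged, and it does not follow from equality of those symmetric sums.

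Your suspicion is in fact fatal: the implication cannot be closed by any route, because the statement is false for $k\geq 2$. In $A_2$, set $\theta(v_1)=v_1$ and $\theta(v_2)=v_1+v_2$. Both images are idempotent, so $\theta$ is a well-defined endomorphism fixing $\mathbb{F}_2$; moreover $\theta^2(v_1)=v_1$ and $\theta^2(v_2)=\theta(v_1+v_2)=v_1+(v_1+v_2)=v_2$, so $\theta$ is an involution and hence an automorphism. Yet $\theta(v_2)=v_1+v_2\notin\{v_j,v_j+1\}$. Consistently with the paper's computation, $v_1+(v_1+v_2)+v_1(v_1+v_2)=v_1+v_2+v_1v_2$, so $J_\theta=J$ even though the individual generators do not match up, which is exactly where the paper's proof breaks. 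A count makes the failure transparent: under the CRT isomorphism $A_k\cong\mathbb{F}_2^{2^k}$ every permutation of the $2^k$ primitive idempotents is a ring automorphism, giving $(2^k)!$ automorphisms, whereas the maps described in the proposition number only $2^k\,k!$; these agree only for $k\leq 1$. So rather than trying to complete your slab argument, the right move is to recognize that the forward direction must be weakened (e.g., to a statement about the induced permutation of maximal ideals or primitive idempotents), and to note that the subsequent corollary classifying involutions inherits the same problem.
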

\begin{proof}
 Let $J=\langle v_1,\dots,v_k\rangle$ and $J_{\theta}=\langle \theta(v_1),\dots,\theta(v_k)\rangle.$
Consider the map
\[
\begin{array}{llll}
\lambda : & \displaystyle{\frac{A_k}{J}} & \longrightarrow & \displaystyle{\frac{A_k}{J_{\theta}}}\\
	   & t+J & \longmapsto & \theta(t)+J_{\theta}. \\
\end{array}
\]
We can see that the map $\lambda$ is a ring homomorphism.
For any $a,b\in A_k/J$ where $\lambda(a)=\lambda(b)$, let $a=a_1+J$ and $b=b_1+J$ for some $a_1,b_1\in A_k.$
We have that
$\theta(a_1-b_1)\in J_{\theta},$ so $a_1-b_1\in J.$ Consequently, $a-b=0+J,$ which means $a=b,$
in other words, $\lambda$ is a monomorphism.
Moreover, for any $a^\prime  \in A_k/J_{\theta},$ let $a^\prime=a_2+J_{\theta}$ for some $a_2\in A_k,$
then there exists $a=\theta^{-1}(a_2)+J$ such that
$\lambda(a)=a^\prime.$ Therefore, $\mathbb{F}_2\simeq A_k/J\simeq A_k/J_{\theta},$
which implies $J_{\theta}$ is also a maximal ideal.
By Theorem \ref{teoideal}, $J_{\theta}=\langle w_1,w_2,\ldots,w_k\rangle.$  By \cite[Theorem 2.6]{C-D-D},
\[
J_\theta=
\left\langle \sum_{B \in {\cal P}_k \backslash\{\emptyset\}}\prod_{i\in B}w_i \right\rangle
=\left \langle \sum_{B \in {\cal P}_k \backslash \{ \emptyset\}}
\prod_{i\in B}\theta(v_i) \right\rangle
\]
which means, $\sum_{B \in {\cal P}_k \backslash \{ \emptyset\}} \prod_{i\in B}w_i$ and
$\sum_{B \in {\cal P}_k \backslash \{ \emptyset\}}
\prod_{i\in B}\theta(v_i)$ are associate. Since the only unit in $A_k$ is $1,$ then they must be equal.
 Therefore, $\theta(v_{i})=w_j.$  Since $\theta$ is an automorphism, $\theta(v_i)\not=\theta(v_j)$ whenever $i\not=j.$

Suppose that $\theta(v_i)=w_j,$ and $\theta(v_i)\not=\theta(v_j)$ whenever $i\not=j.$
 By Lemma \ref{lemavector}, we can see that $\theta$ is also an
automorphism.
\end{proof}
We have the following immediate consequence.

\begin{cor}
An involution $\psi$  of   $A_k$ must be $\Theta_S$ or it  must  have
$\psi(v_i) = w_j \in \{v_j,1+v_j\} $ and $\psi(w_j) = v_i$
 where $i,j \in \{1,\dots,k\}$.
\end{cor}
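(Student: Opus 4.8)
The plan is to combine the automorphism characterization of Proposition \ref{auto} with the defining relation $\psi^2 = \mathrm{id}$ and to read the two possibilities off from an $\mathbb{F}_2$-computation of $\psi^2$ on the generators $v_1, \ldots, v_k$.

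First I would note that an involution is in particular an automorphism, so Proposition \ref{auto} applies: the images $\psi(v_1), \ldots, \psi(v_k)$ are, up to order, exactly a full set $w_1, \ldots, w_k$ with $w_m \in \{v_m, v_m+1\}$, one representative for each index $m \in \{1, \ldots, k\}$. Together with the injectivity condition $\psi(v_i) \neq \psi(v_j)$ for $i \neq j$, this yields a permutation $\sigma$ of $\{1, \ldots, k\}$ and scalars $\epsilon_i \in \mathbb{F}_2$ with $\psi(v_i) = v_{\sigma(i)} + \epsilon_i$ for every $i$. Since $\psi$ fixes $\mathbb{F}_2$ pointwise and $A_k$ has characteristic $2$, I then compute
$$\psi^2(v_i) = \psi(v_{\sigma(i)} + \epsilon_i) = v_{\sigma^2(i)} + \epsilon_{\sigma(i)} + \epsilon_i.$$
Imposing $\psi^2(v_i) = v_i$ and comparing coefficients in the $\mathbb{F}_2$-basis of $A_k$ gives two conditions: $\sigma^2 = \mathrm{id}$, so $\sigma$ is an involution of $\{1, \ldots, k\}$, and $\epsilon_{\sigma(i)} = \epsilon_i$ for all $i$, that is, the sign pattern $(\epsilon_i)$ is constant on the orbits of $\sigma$.

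Next I would split into cases on $\sigma$. If $\sigma = \mathrm{id}$, then $\psi(v_i) = v_i + \epsilon_i$ for all $i$, which is precisely $\Theta_S$ with $S = \{i : \epsilon_i = 1\}$ by the definition $\Theta_S = \prod_{i \in S}\Theta_i$; this is the first alternative. If $\sigma \neq \mathrm{id}$, then as a nontrivial involution $\sigma$ contains a transposition, so there are $i \neq j$ with $\sigma(i) = j$, $\sigma(j) = i$, and $\epsilon_i = \epsilon_j =: \epsilon$ by the orbit condition. Putting $w_j := v_j + \epsilon \in \{v_j, 1 + v_j\}$ we obtain $\psi(v_i) = w_j$ and $\psi(w_j) = \psi(v_j) + \epsilon = v_i + \epsilon_j + \epsilon = v_i$, which is exactly the second alternative.

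I do not expect a serious obstacle; the result is a short computation once Proposition \ref{auto} is available. The points needing care are the correct extraction of the permutation $\sigma$ from the automorphism characterization, the use of characteristic $2$ (so that the cross terms $\epsilon_{\sigma(i)} + \epsilon_i$ and $\epsilon_j + \epsilon$ collapse as claimed), and, in the case $\sigma = \mathrm{id}$, matching the resulting map with $\Theta_S$ through the definition of the $\Theta_i$.
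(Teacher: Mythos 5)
Your proof is correct and follows essentially the same route as the paper, which merely sketches the argument in two sentences (invoke Proposition \ref{auto}, then impose $\psi^2=\mathrm{id}$); you have simply carried out that computation explicitly via the permutation $\sigma$ and the signs $\epsilon_i$. The only point worth a remark is that extracting a genuine permutation of indices uses the bijectivity of $\psi$ on all of $A_k$ (to exclude, say, $\psi(v_i)=v_j$ and $\psi(v_{i'})=v_j+1$), not just the condition $\psi(v_i)\neq\psi(v_j)$ on generators as literally stated in Proposition \ref{auto}.
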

\begin{proof}
By the previous proof we have that the only automorphisms send $v_i$ to either $v_j$ or $1+v_j$.
Then to be an involution the application of the map twice must be the identity which gives the result.
\end{proof}

\subsection{Gray Map}

Gray maps were defined on all commutative rings of order 4, see \cite{TypeIV} for a description of these four Gray maps.
For the ring $A_1 = \FF_2 + v \FF_2$,
we have  the Gray map $\phi_1:A_1 \rightarrow \FF_2^2 $ defined by $\phi(a + b v_1) = (a,a+b)$.
For $A_1$ this is realized as
\begin{eqnarray*}
0 &\rightarrow 00 \\
1 &\rightarrow 11 \\
v &\rightarrow 01 \\
1+v &\rightarrow 10. \\
\end{eqnarray*}
We extend this map inductively as follows.  Every element in the ring $A_k$
can be written as $\alpha + \beta v_k$, where $\alpha, \beta \in A_{k-1}.$
Then for $k \geq 2$, define $\phi_k:A_k \rightarrow A_{k-1}^2$ by
\[
\phi_k(\alpha + \beta v_k) = (\alpha, \alpha + \beta).
\]
Then define $\Phi_k : A_k \rightarrow \FF_2^{2^k}$ by
$\Phi_1(\gamma) = \phi_1(\gamma)$,
$\Phi_2(\gamma) = \phi_1(\phi_2(\gamma))$ and
\[
\Phi_k (\gamma) = \phi_1 (\phi_2(  \dots (\phi_{k-2} ( \phi_{k-1} ( \phi_k (\gamma)) \dots ).
\]
It follows immediately that $\Phi_k(1) = {\bf 1 },$ the all one vector.

We note that the Gray map is a bijection and is a linear map.

\begin{ex}
For $k=2$, we have that  $\Phi_2: A_2\rightarrow {\mathbb{F}}_2^{4}$ and
\begin{equation*}
\Phi_2(a +bv_1 + c v_2  + d v_1v_2) =
(a, a+b, a+c, a+b+c+d).
\end{equation*}
\end{ex}

Let $T$ be the matrix that performs the cyclic shift on a vector.  That is $T(v_1,v_2,\dots, v_n) = (v_n,v_1,\dots, v_{n-1}).$
Let $\sigma _{i,k}$ be the permutation on  $ \{ 1, 2, \dots, 2^{k} \}$ defined by
\begin{equation*}
(\sigma _{i,k})_{\{ p2^{i}+1, \dots, (p+1)2^{i} \}}= T^{2^{i-1}}( p2^{i}+1, \dots, (p+1)2^{i}  ),
\end{equation*}
for all $0 \leq p \leq 2^{k-i}-1$.
Let $\Sigma _{i,k}$ be the permutation on
elements of ${\mathbb{F}}_{2}^{2^{k}}$ induced by $\sigma _{i,k}$. That is,
for ${\mathbf{x}}=(x_{1},x_{2},\ldots ,x_{2^{k}})\in {\mathbb{F}}%
_{2}^{2^{k}} $,
\begin{equation}
\Sigma _{i,k}({\mathbf{x}})=\left( x_{\sigma _{i,k} (1)},x_{\sigma
_{i,k}(2)},\ldots ,x_{\sigma _{i,k} (2^{k})}\right) \text{.}
\label{eq:induce}
\end{equation}

In other word, $\Sigma_{i,k}$ is a permutation induced by $\sigma_{i,k}.$
To clarify the permutation above, let us consider an example below.
\begin{ex}
Let $k=2$ and $i=1.$ We want to find $\Sigma_{1,2}(\x),$ for all $\x=(x_1,x_2,x_3,x_4)\in\mathbb{F}_2^{2^2}.$
Since $i=1,$ we have $0\leq p\leq 1.$ Hence, for $p=0$ we have
\[
\left(\sigma_{1,2}\right)_{\{1,2\}}=T(1,2)=(2,1)
\]
and for $p=1$ we have
\[
\left(\sigma_{1,2}\right)_{\{3,4\}}=T(3,4)=(4,3).
\]
Therefore,
\[
\begin{aligned}
\Sigma_{1,2}(\x) &=  \left(x_{\sigma_{1,2}(1)},x_{\sigma_{1,2}(2)},x_{\sigma_{1,2}(3)},x_{\sigma_{1,2}(4)}\right)\\
                 &=  \left(x_2,x_1,x_4,x_3\right).
\end{aligned}
\]

For $i=2,$ we have $p=0.$ Then,
\[
\left(\sigma_{2,2}\right)_{\{1,2,3,4\}}=T^{2}(1,2,3,4)=(3,4,1,2).
\]
Therefore, we have
\[
\begin{aligned}
\Sigma_{2,2}(\x) & =  \left(x_{\sigma_{2,2}(1)},x_{\sigma_{2,2}(2)},x_{\sigma_{2,2}(3)},x_{\sigma_{2,2}(4)}\right)\\
                 & =  \left(x_3,x_4,x_1,x_2\right).
\end{aligned}
\]
\label{sigma}
\end{ex}

We can also define another map which will be used later for constructing generators
for $\Theta_S-$cyclic codes and optimal codes.
Let $p,k\in\mathbb{N}$, where $p< k$. Let
$\Omega_p=\{p+1,\dots,k\}$ and $s=2^{k-p}$. We have that $|{\cal P}_k(\Omega_p)|=s$.
We can define a Gray map as follows:
\[
\Psi_{k,p}: A_k\rightarrow A_p^s.
\]
Denote the coordinates of $A_p^s$ by the lexicographic ordering of the subsets
of $\Omega_p$ and denote them by $B_1,\dots,B_s$. Note that $B_1=\emptyset$
and $B_s=\Omega_p$. An element of $A_k$ can be written as $\sum_{B\subseteq\Omega_p}\alpha_Bw_B$,
where $\alpha_B\in A_p$ and $w_B=\prod_{i \in B}v_i$. Then
\[
\Psi_{k,p}\left(\sum_{B\subseteq\Omega_p}\alpha_Bw_B\right)=
\left(\sum_{D\subseteq B_1}\alpha_D,\sum_{D\subseteq B_2}\alpha_D,\dots,
\sum_{D\subseteq B_s}\alpha_D\right).
\]
For $p=0$, $\Psi_{k,0}$ is the same map as $\Psi_k$ \cite{C-D-D}.
In that paper, it is shown that $\Psi_k$ and $\Phi_k$ are conjugate, {\it i.e.}
their images are permutation equivalent.

Notice that the representation of an element in $A_k$
can be changed by replacing any $v_i$ with $1+v_i$. In that way, we can let $u_i$ be either $v_i$
or $v_i+1$ for each $i$. Then we have an alternative definition of $\Psi_{k,p}$ as
\[
\overline{\Psi}_{k,p}\left(\sum_{B\subseteq\Omega_p}\alpha_By_B\right)
=\left(\sum_{D\subseteq B_1}\alpha_D,\sum_{D\subseteq B_2}\alpha_D,\dots,
\sum_{D\subseteq B_s}\alpha_D\right)
\]
where $y_B=\prod_{i\in B}u_i$. Any result for $\Psi_{k,p}$ can be replaced for $\overline{\Psi}_{k,p}$.

\begin{lem}
Let  $k \geq 1$ and   $1 \leq i \leq k$.  For $x\in A_{k}$ we have
\begin{equation*}
\Sigma _{i,k}(\Phi _{k}(x))=\Phi _{k}(\Theta_i (x)).
\end{equation*}
\end{lem}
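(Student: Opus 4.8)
The plan is to prove the identity by induction on $k$, writing each $x \in A_k$ as $x = \alpha + \beta v_k$ with $\alpha,\beta \in A_{k-1}$ and exploiting the recursive definition of the Gray map, namely that $\Phi_k$ equals $\Phi_{k-1}$ applied componentwise after $\phi_k$. Concretely, $\phi_k(x) = (\alpha,\alpha+\beta)$ gives $\Phi_k(x) = (\Phi_{k-1}(\alpha),\Phi_{k-1}(\alpha+\beta))$, a vector of length $2^k$ naturally split into two halves of length $2^{k-1}$. For the base case $k=1$ (so $i=1$) I would simply compute: since $\Theta_1(a+bv_1) = (a+b)+bv_1$, we get $\Phi_1(\Theta_1(x)) = (a+b,a)$, while $\Sigma_{1,1}$ is the transposition of the two coordinates, so $\Sigma_{1,1}(\Phi_1(x)) = \Sigma_{1,1}(a,a+b) = (a+b,a)$, establishing equality.

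In the inductive step I would split into the two cases $i=k$ and $i<k$. For $i=k$ the automorphism acts as $\Theta_k(x) = (\alpha+\beta)+\beta v_k$, so $\phi_k(\Theta_k(x)) = (\alpha+\beta,\alpha)$ is obtained from $\phi_k(x) = (\alpha,\alpha+\beta)$ by interchanging the two coordinates; applying $\Phi_{k-1}$ componentwise shows that $\Phi_k(\Theta_k(x))$ is exactly $\Phi_k(x)$ with its two halves of length $2^{k-1}$ swapped. It then remains to read off from the definition of $\sigma_{k,k}$ — a single block of size $2^k$ acted on by $T^{2^{k-1}}$ — that $\Sigma_{k,k}$ is precisely the permutation exchanging the two halves, which closes this case. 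For $i<k$ I would use that $\Theta_i$ fixes $v_k$, whence $\Theta_i(x) = \Theta_i(\alpha)+\Theta_i(\beta)v_k$ and therefore $\Phi_k(\Theta_i(x)) = (\Phi_{k-1}(\Theta_i(\alpha)),\Phi_{k-1}(\Theta_i(\alpha+\beta)))$; invoking the induction hypothesis on each coordinate rewrites this as $(\Sigma_{i,k-1}(\Phi_{k-1}(\alpha)),\Sigma_{i,k-1}(\Phi_{k-1}(\alpha+\beta)))$.

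Finally I would finish with the key combinatorial lemma: for $i<k$ the permutation $\sigma_{i,k}$ maps each of the index sets $\{1,\dots,2^{k-1}\}$ and $\{2^{k-1}+1,\dots,2^k\}$ to itself, acting as $\sigma_{i,k-1}$ on the first half and as a copy of $\sigma_{i,k-1}$ shifted by $2^{k-1}$ on the second. This is immediate from the block description, since the blocks of $\sigma_{i,k}$ indexed by $0 \le p \le 2^{k-1-i}-1$ exactly fill the first half while the remaining blocks fill the second half, each block being shifted internally by the same $T^{2^{i-1}}$. Consequently $\Sigma_{i,k}$ acts block-diagonally as $\Sigma_{i,k-1}$ on each half, so $\Sigma_{i,k}(\Phi_k(x)) = (\Sigma_{i,k-1}(\Phi_{k-1}(\alpha)),\Sigma_{i,k-1}(\Phi_{k-1}(\alpha+\beta)))$, which matches the expression obtained above and completes the induction. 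I expect this last piece of combinatorial bookkeeping — tracking the block indices and the offset by $2^{k-1}$ so that $\Sigma_{i,k}$ genuinely decouples across the two halves — to be the main obstacle, whereas the algebraic identities for $\Theta_i$ and the reduction through $\phi_k$ are routine.
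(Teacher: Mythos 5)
Your proposal is correct and follows essentially the same route as the paper's proof: induction on $k$ via the decomposition $x=\alpha+\beta v_k$, with the case $i=k$ handled by noting $\Sigma_{k,k}=T^{2^{k-1}}$ swaps the two halves, and the case $i<k$ handled by showing $\Sigma_{i,k}$ acts block-diagonally as $\Sigma_{i,k-1}$ on each half so the induction hypothesis applies. Your explicit justification of the block-diagonal decoupling (counting which blocks $p$ land in which half) is in fact cleaner than the paper's coordinate-by-coordinate display, and you avoid a couple of typographical slips present in the paper's version.
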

\begin{proof}
Let $\ 1\leq i\leq k$. For $k=1 $, we have $x= a+b v$ and $\Theta_1(x)= (
a+b) +b v$. Thus $\Phi_1(\Theta_1(x)) = ( a+b,a) = \Sigma_{1,1} ( \Phi_1(x)).$

Assume the result is true for values less than or equal to $k-1$ and let $i < k $ with
$x\in A_{k}$.   We have $x = a+b v_k $ with $a,b \in A_{k-1}$ and
$\Theta_i(x)= \Theta_i(a)+ \Theta_i(b)v_k$ . Then
\begin{equation*}
\phi_k(x) = (a, a+b).
\end{equation*}
It follows that
\begin{equation*}
\Phi_k(x) = ( \Phi_{k-1}(a),\Phi_{k-1}(a+b)) \hspace{0.5cm} \text{and} \hspace{0.5cm}
\Phi_k(\Theta_i(x)) = ( \Phi_{k-1}(\Theta_i(a)), \Phi_{k-1}(\Theta_i(a)+ \Theta_i(b))).
\end{equation*}
We note that
\begin{equation*}
\Phi_{k-1}(a) =( x_1,x_2, \ldots,x_{2^{k-1}}),\hspace{1cm}
\Phi_{k-1}(a+b) =(x^\prime_1, x^\prime_2, \ldots,x^\prime_{2^{k-1}})
\end{equation*}
and
\begin{equation*}
\Phi_k(x) = ( \Phi_{k-1}(a),\Phi_{k-1}(a+b)) = ( y_1, y_2, \ldots,y_{2^{k-1}},
y_{2^{k-1}+1}, \ldots,y_{2^{k}})
\end{equation*}
with $( x_1,x_2, \ldots,x_{2^{k-1}}) = ( y_1,y_2,\ldots,y_{2^{k-1}})$ and $( x^\prime_1,
x_2,\ldots,x^\prime_{2^{k-1}}) = ( y_{2^{k-1}+1}, y_{2^{k-1}+2},\ldots,y_{2^{k}})$.

We have that
\[
\begin{aligned}
\Sigma_{i,k}(\Phi_k(x)) 
&= \Sigma_{i,k}( (\underbrace{y_1,\ldots,y_{2^{i}},}_{2^i} \underbrace{y_{2^{i}+1},\ldots, y_{2.2^{
i}},}_{2^i}\ldots, \underbrace{y_{(2^{k-i}-1)2^i+1}, \ldots,y_{2^k}}_{2^i})) \\
&=( \underbrace{y_{2^{i-1}+1}, y_{2^{i-1}+2}, \ldots, y_{2^i}, y_1,\ldots, y_{2^{i-1}},}_{2^i}\\
&\qquad  \underbrace{y_{2^{i+1}-2^{i-1}+1},y_{2^{i+1}-2^{i-1}+2},\ldots,y_{2^i+1}, \ldots,y_{2^{i+1}-2^{i-1}}}_{2^i},\\
&\qquad  \ldots, \underbrace{y_{2^{i-1} \cdot (2^{k-i+1}-1)+1},y_{2^{i-1} \cdot (2^{k-i+1}-1)+2},\ldots,
y_{2^k}, \ldots,y_{2^{i-1} \cdot (2^{k-i+1}-1)}}_{2^i} ) \\
&= ( \Sigma_{i,k-1}(\Phi_{k-1}(a), \Sigma_{i,k-1}(\Phi_{k-1}(a+b)).
\end{aligned}
\]
The hypothesis of recurrence implies that
\[
\begin{aligned}
\Sigma_{i,k-1}(\Phi_k(x)) &= ( \Phi_{k-1}(\Theta_i(a)),
\Phi_{k-1}(\Theta_i(b))) \\
&= \Phi_k(\Theta_i(x)).
\end{aligned}
\]

Now  consider the case when $i=k$.  For $x\in A_{k}$,  $x = a+b v_k $
with $a,b \in A_{k-1}$ we have that  $\Theta_k(x)= (a+b) + b v_k$ . Then
\begin{equation*}
\phi_k (x) = (a, a+b),\hspace{1cm} \phi_k (\Theta_k(x)) = ( a+b,a)
\end{equation*}
and consequently
\begin{equation*}
\Phi_k(x) = ( \Phi_{k-1}(a),\Phi_{k-1}(a+b))  \quad \text{ and } \quad
\Phi_k(\Theta_k(x)) = ( \Phi_{k-1}(a+b ),\Phi_{k-1}(a+b)).
\end{equation*}
We note that
\begin{equation*}
\Phi_{k-1}(a) =( x_1,\ldots, x_{2^{k-1}}), \hspace{1cm} \Phi_{k-1}(a+b) =(
x^{\prime }_1, \ldots, x^{\prime }_{2^{k-1}})
\end{equation*}
and
\begin{equation*}
\Phi_k(x) = ( \Phi_{k-1}(a),\Phi_{k-1}(a+b)) = ( y_1,y_2,\ldots, y_{2^{k-1}},
y_{2^{k-1}+1}, \ldots, y_{2^{k}})
\end{equation*}
with $(x_1, \ldots,x_{2^{k-1}}) = ( y_1,\ldots,y_{2^{k-1}})$ and $( x^{\prime }_1,
\ldots, x^{\prime }_{2^{k-1}}) = ( y_{2^{k-1}+1},\ldots,y_{2^{k}})$.\newline
On the other hand,   when $i=k$ we have that
\[
\Sigma_{k,k}= T ^{2^{k-1}}.
\]
Then we have that
\[
\begin{aligned}
\Sigma_{k,k}(\Phi_k(x)) &= \Sigma_{i,k}(( y_1, \ldots,y_{2^{k-1}},
y_{2^{k-1}+1}, \ldots,y_{2^{k}})) \\
&= T ^{2^{k-1}} (( y_1, \ldots,y_{2^{k-1}}, y_{2^{k-1}+1}, \ldots,y_{2^{k}})) \\
&= ( y_{2^{k-1}+1}, \ldots,y_{2^{k}},y_1, \ldots,y_{2^{k-1}}) \\
&=( \Phi_{k-1}(a+b), \Phi_{k-1}(a)) \\
&= \Phi_k(\Theta_k(x)).
\end{aligned}
\]
\end{proof}

{
\begin{ex}
Consider
\[
\begin{aligned}
\Phi_2(v_1v_2)   & =  (\Phi_1(0),\Phi_1(v_1))\\
                 & =  (0,0,0,1).
\end{aligned}
\]
Then
\[
\begin{aligned}
\Phi_2(\Theta_1(v_1v_2)) & = \Phi_2((v_1+1)v_2)\\
                         & =  (\Phi_1(0),\Phi_1(1+v_1))\\
                         & =  (0,0,1,0)\\
                         & =  \Sigma_{1,2}(0,0,0,1)\\
                         & = \Sigma_{1,2} (\Phi_2(v_1v_2)),
\end{aligned}
\]
also,
\[
\begin{aligned}
\Phi_2(\Theta_2(v_1v_2)) & = \Phi_2(v_1+v_1v_2)\\
                         & = (\Phi_1(v_1),\Phi_1(0))\\
                         & = (0,1,0,0)\\
                         & = \Sigma_{2,2}(0,0,0,1)\\
                         & = \Sigma_{2,2} (\Phi_2(v_1v_2)).
\end{aligned}
\]
\end{ex}
}

We can extend the definition of $\Sigma$ to subsets of $\{1,2,\dots, k \}.$

\begin{defn}
For all $\ A\subseteq \{ 1, 2, \dots, k \}$ we define the
permutation $\Sigma_{A,k} $ by
\begin{equation*}
\Sigma _{A,k}=\prod\limits_{i\in A}\Sigma _{i,k}.
\end{equation*}
\end{defn}

It is clear that  for all $x\in A_{k}$ we have
\begin{equation}
\Sigma _{A,k}(\Phi _{k}(x))=\Phi _{k}(\Theta_{A} (x)).
\label{olfa}
\end{equation}

\section{ $\Theta_S -$cyclic codes over $A_{k}$}

We can now define skew cyclic codes using this family of rings and family of automorphisms.

\begin{defn}
A subset $C$ of $A_k^n$ is called a skew cyclic code of length $n $ if and
only if
\begin{itemize}
\item  $C$ is a submodule of $A_k^n$,
and
\item $C$ is invariant under the $\Theta_S-$shift $T_{\Theta_S}.$
\end{itemize}
\end{defn}

The second condition can be seen as follows.
If
\[
\vc = (c_0, c_1, \ldots, c_{n-1}) \in C
\]
then
\[
T_{\Theta_S}(\vc) =(\Theta_S(c_{n-1}), \Theta_S (c_0), \ldots, \Theta_S
(c_{n-2}))\in C.
\]

We can now describe the ambient algebraic space we shall use to describe skew cyclic codes.

\begin{defn}
Let $S \subseteq \{ 1, 2, \dots, k \} $. Define the skew
polynomial ring $A_k[x, \Theta_S ] =  \{a_0 + a_1 x + a_2x^2 + \cdots+a_n x^n|~
a_i \in A_k \}$ as the set of polynomials over $A_k$ where the addition
is the usual polynomial addition and the multiplication is defined by the
basic rule
\begin{equation*}
xa =\Theta_S (a) x
\end{equation*}
and extended to all elements of $A_k[x, \Theta_S ]$ by associativity and
distributivity.
\end{defn}

It is clear that the set $(A_k)_n = A_k[x, \Theta_S] / { \langle x^n -1 \rangle }$ is a left $
A_k[x, \Theta_S]$-module.
The next theorem follows immediately from the definitions.

\begin{theorem}
A code $C $ in $A_k^n$ is a $\Theta_S-$cyclic code if and only if $C$ is a
left $A_k[x, \Theta_S]$-submodule of the left $A_k[x, \Theta_S]$-module $
(A_k)_n.$
\end{theorem}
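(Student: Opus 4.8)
The plan is to set up the standard dictionary between vectors and skew polynomials and to verify that the $\Theta_S$-shift is realized by left multiplication by $x$ in $(A_k)_n$. First I would introduce the map
\[
\pi : A_k^n \longrightarrow (A_k)_n, \qquad \pi(c_0, c_1, \ldots, c_{n-1}) = c_0 + c_1 x + \cdots + c_{n-1}x^{n-1},
\]
and record that $\pi$ is a bijective homomorphism of left $A_k$-modules, where $(A_k)_n$ carries the $A_k$-module structure obtained by restriction of scalars. Bijectivity is immediate since every class in $(A_k)_n$ has a unique representative of degree at most $n-1$, and $A_k$-linearity holds because left multiplication by a scalar $a \in A_k$ carries no twist, $a(c_i x^i) = (a c_i)x^i$.

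The crucial step is to show $\pi(T_{\Theta_S}(\vc)) = x \cdot \pi(\vc)$. Applying the defining rule $xa = \Theta_S(a)x$ monomial by monomial gives
\[
x \cdot \pi(\vc) = \sum_{i=0}^{n-1} x c_i x^i = \sum_{i=0}^{n-1} \Theta_S(c_i) x^{i+1} = \Theta_S(c_0)x + \cdots + \Theta_S(c_{n-2})x^{n-1} + \Theta_S(c_{n-1})x^n,
\]
and then reducing the top term via $\Theta_S(c_{n-1})x^n \equiv \Theta_S(c_{n-1})$ produces exactly $\pi\big(\Theta_S(c_{n-1}), \Theta_S(c_0), \ldots, \Theta_S(c_{n-2})\big) = \pi(T_{\Theta_S}(\vc))$. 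Hence under $\pi$ the shift $T_{\Theta_S}$ corresponds precisely to left multiplication by $x$.

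Granting these two facts, the equivalence falls out directly. By definition $C \subseteq A_k^n$ is $\Theta_S$-cyclic iff it is an $A_k$-submodule closed under $T_{\Theta_S}$; by the first paragraph the submodule condition transports to $\pi(C)$ being closed under left multiplication by all scalars of $A_k$, and by the second paragraph the shift-invariance transports to $\pi(C)$ being closed under left multiplication by $x$. Since $A_k[x,\Theta_S]$ is generated as a ring by $A_k$ together with $x$, closure under these two families of operations is equivalent to closure under left multiplication by every element of $A_k[x,\Theta_S]$. Therefore $C$ is $\Theta_S$-cyclic iff $\pi(C)$ is a left $A_k[x,\Theta_S]$-submodule of $(A_k)_n$.

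I expect the main obstacle to be justifying the reduction in the middle display as a genuine identity in the left module $(A_k)_n$ rather than a formal substitution: one checks that $\Theta_S(c_{n-1})(x^n - 1)$ lies in the left ideal $\langle x^n - 1\rangle$, so that $\Theta_S(c_{n-1})x^n \equiv \Theta_S(c_{n-1})$, and here it matters that we work with left modules, since $\langle x^n - 1\rangle$ need not be two-sided when $n$ is odd (as $\Theta_S$ is an involution, $x^n a = \Theta_S(a)x^n$ for odd $n$, so a nonzero constant $\Theta_S(a)-a$ would otherwise have to lie in the ideal). The only other point needing care is that closure under the generating operations propagates to closure under all of $A_k[x,\Theta_S]$, which is routine once one writes an arbitrary element as an $A_k$-combination of powers of $x$ and iterates multiplication by $x$.
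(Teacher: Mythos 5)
Your proof is correct and is exactly the standard identification the paper has in mind: the paper offers no written proof (it states that the theorem ``follows immediately from the definitions''), and your argument simply fills in the routine details, identifying $T_{\Theta_S}$ with left multiplication by $x$ under the polynomial representation and noting that closure under $A_k$ and under $x$ generates closure under all of $A_k[x,\Theta_S]$. Your side remark about $\langle x^n-1\rangle$ being only a left ideal when $n$ is odd and $S\neq\emptyset$ is a correct and worthwhile point of care that the paper glosses over.
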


The following theorem characterizes  skew-polynomial rings over $A_k$.

\begin{theorem}
Let $A_k[x,\Theta_{S_1}]$ and $A_k[x,\Theta_{S_2}]$ be two skew polynomial rings over $A_k$.
Then, $A_k[x,\Theta_{S_1}]\cong A_k[x,\Theta_{S_2}]$ if and only if $|S_1|=|S_2|$.
\end{theorem}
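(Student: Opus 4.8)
The plan is to reduce the isomorphism problem for the two skew polynomial rings to a conjugacy problem for the defining automorphisms inside $\mathrm{Aut}(A_k)$, and to exploit the description of automorphisms in Proposition \ref{auto}. The basic observation is a conjugation formula: if $\mu$ is any automorphism of $A_k$, then by Proposition \ref{auto} it has the form $\mu(v_i)=w_{\pi(i)}$ for a permutation $\pi$ of $\{1,\dots,k\}$ and $w_{\pi(i)}\in\{v_{\pi(i)},v_{\pi(i)}+1\}$. A short computation on generators, using only that $\Theta_S(v_j)=v_j+1$ for $j\in S$ and $\Theta_S(v_j)=v_j$ otherwise, then gives
\begin{equation*}
\mu\,\Theta_{S}\,\mu^{-1}=\Theta_{\pi(S)},
\end{equation*}
the flips $w_{\pi(i)}$ contributing nothing to the right-hand side. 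Thus conjugation by an automorphism of $A_k$ carries $\Theta_S$ to $\Theta_{\pi(S)}$, and as $\pi$ ranges over all permutations, $\pi(S)$ ranges over exactly the subsets of the same cardinality as $S$.

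For the ``if'' direction, assume $|S_1|=|S_2|$ and choose a permutation $\pi$ with $\pi(S_1)=S_2$; let $\mu$ be the corresponding coordinate automorphism $v_i\mapsto v_{\pi(i)}$, so that $\mu\Theta_{S_1}\mu^{-1}=\Theta_{S_2}$ by the formula above. I would then define
\begin{equation*}
\Lambda\Big(\sum_i a_i x^i\Big)=\sum_i \mu(a_i)\,x^i
\end{equation*}
and verify it is a ring isomorphism. Additivity and bijectivity are immediate; the only point requiring attention is compatibility with the twisted multiplication $xa=\Theta_{S_1}(a)x$, which reduces to the identity $\Theta_{S_2}\circ\mu=\mu\circ\Theta_{S_1}$, i.e. precisely the conjugation relation. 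This direction is routine.

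For the ``only if'' direction, suppose $\Lambda\colon A_k[x,\Theta_{S_1}]\to A_k[x,\Theta_{S_2}]$ is a ring isomorphism. The strategy is to show that $\Lambda$ restricts to an automorphism $\mu$ of the coefficient ring $A_k$ and carries $x$ to an element whose leading behaviour forces the conjugacy $\mu\Theta_{S_1}\mu^{-1}=\Theta_{S_2}$; once this is established, Proposition \ref{auto} tells us $\mu$ has a permutation part $\pi$, and the conjugation formula yields $S_2=\pi(S_1)$, hence $|S_1|=|S_2|$. To pin down $\Lambda$ on the base ring I would work with the grading of $A_k[x,\Theta_S]$ in which $A_k$ sits in degree $0$ and $x$ in degree $1$ (the defining relations $xa=\Theta_S(a)x$ are homogeneous), together with the rigidity coming from the fact that the only unit of $A_k$ is $1$, so as to reconstruct the subring $A_k$ and the class of $x$ intrinsically and then read off the induced automorphism of $A_k$ from the degree-$0$ and degree-$1$ components.

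The hard part will be precisely this last step. A ring isomorphism need not \emph{a priori} respect the degree filtration, so the genuine work is to characterize the coefficient subring $A_k$ inside the skew polynomial ring in purely ring-theoretic terms, forcing $\Lambda(A_k)=A_k$, and then to show that the conjugating map extracted from $\Lambda$ really lies in the group $\mathrm{Aut}(A_k)$ described by Proposition \ref{auto}. This requires a careful analysis of the idempotents and units of $A_k[x,\Theta_S]$ of positive degree, since such elements exist and obstruct the naive identification of $A_k$ with the set of constants; controlling them is the crux on which the whole converse rests.
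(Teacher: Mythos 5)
Your ``if'' direction is fine: the conjugation formula $\mu\,\Theta_{S}\,\mu^{-1}=\Theta_{\pi(S)}$ is correct (the flips $v_i\mapsto v_i+1$ commute with each $\Theta_i$, so only the permutation part matters), and the coefficientwise map $\sum a_ix^i\mapsto\sum\mu(a_i)x^i$ is indeed an isomorphism when $\mu\Theta_{S_1}\mu^{-1}=\Theta_{S_2}$. The problem is the converse, and you have said so yourself: you explicitly defer ``the crux on which the whole converse rests,'' namely showing that an arbitrary ring isomorphism $\Lambda$ must carry the coefficient subring $A_k$ to itself and must induce a conjugacy between $\Theta_{S_1}$ and $\Theta_{S_2}$. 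A proposal that names its own missing step as the hard part has not proved the theorem; the sketched route via a grading and an intrinsic characterization of $A_k$ inside $A_k[x,\Theta_S]$ is not carried out, and it is not clear it can be, since (as you note) a ring isomorphism has no a priori reason to respect degree, and $A_k$ is far from being the full degree-zero piece of any structure preserved by $\Lambda$.

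The paper closes exactly this gap by citing Rimmer's classification of isomorphisms between skew polynomial rings (Theorem 3 of the reference \emph{Isomorphisms between skew polynomial rings}): any isomorphism is given by an automorphism $\lambda$ of the base ring together with $x\mapsto b_0+b_1x+\cdots+b_nx^n$, subject to the compatibility conditions $\lambda(\Theta_{S_1}(a))b_i=b_i\Theta_{S_2}^{\,i}(\lambda(a))$, and over $A_k$ the fact that $1$ is the only unit forces $b_1=1$ and $b_i=0$ for $i\ge 2$, so $x\mapsto x+b_0$. Taking $a=v_i$ with $i\in S_1$ in the degree-zero condition gives $\lambda\bigl((v_i+1)-v_i\bigr)b_0=0$, hence $b_0=0$, and the degree-one condition then reads $\Theta_{S_2}=\lambda\Theta_{S_1}\lambda^{-1}$; combined with Proposition \ref{auto} this yields $S_2=\pi(S_1)$ for the permutation part $\pi$ of $\lambda$, hence $|S_1|=|S_2|$. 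So your overall architecture (reduce to conjugacy in $\mathrm{Aut}(A_k)$, then use the permutation description of automorphisms) matches the paper's, but to complete the argument you need either to invoke such a classification theorem or to prove its relevant special case for $A_k$; as written, the converse direction is a genuine gap.
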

\begin{proof}
Let $\lambda$ be an automorphism on $A_k$. By Theorem 3 in \cite{rim} and properties in \cite{C-D-D}, we have
$\lambda : x\longmapsto b_0+b_1x+\dots+b_nx^n$ can be extended to
an isomorphism between $A_k[x,\Theta_{S_1}]$ and $A_k[x,\Theta_{S_2}]$ if and only if all the following conditions hold:
\begin{itemize}
 \item $\lambda(\Theta_{S_1}(a))b_i=b_i\Theta_{S_2}^{i}(\lambda(a))$ for every $a\in A_k$ and for all $i=0,1,\dots,n$,
\item $b_1=1$,
\item $b_i=0$ for all $i=2,3,\dots,n$.
\end{itemize}
This implies that  $x\longmapsto x+b_0$. It follows that  we only need to check the following conditions:
\[
\lambda(\Theta_{S_1}(a))b_0=b_0 \lambda(a)
\]
and
\[
\lambda(\Theta_{S_1}(a))=\Theta_{S_2}(\lambda(a)).
\]
For the first condition, if we choose $a=v_i$ where $i\in S_1$,
then we have $\lambda((1+v_i)-v_i)b_0=0$, which gives $b_0=0$. Moreover,
the second condition gives $\Theta_{S_2}=\lambda\Theta_{S_1}\lambda^{-1}$,
and consequently, $\Theta_{S_1}$ and $\Theta_{S_2}$ are conjugate to each other.
Which means, they have to has the same cycle structures as an automorphism.
This gives $|S_1|=|S_2|$, and the result follows.
\end{proof}

As an immediate consequence, we have the following result.

\begin{cor}  Let $S \subseteq \{1, 2, \ldots, k \}.$
There are $k\choose |S|$ skew-polynomial rings $A_k[x,\Theta_{S'}]$ which are isomorphic to a given $A_k[x,\Theta_S]$.
\end{cor}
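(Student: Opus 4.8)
The plan is to count, for a fixed $S \subseteq \{1,2,\dots,k\}$, how many subsets $S' \subseteq \{1,2,\dots,k\}$ give a skew-polynomial ring $A_k[x,\Theta_{S'}]$ isomorphic to $A_k[x,\Theta_S]$. By the preceding theorem, $A_k[x,\Theta_{S'}] \cong A_k[x,\Theta_S]$ holds if and only if $|S'| = |S|$. So the corollary is purely a combinatorial bookkeeping statement: I must count the number of subsets of $\{1,2,\dots,k\}$ whose cardinality equals $|S|$.

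First I would invoke the theorem directly to reduce the isomorphism condition to the numerical condition $|S'| = |S|$. This is the only substantive input; everything after it is elementary counting. Then I would observe that the subsets $S'$ of $\{1,2,\dots,k\}$ with $|S'| = |S|$ are exactly the $|S|$-element subsets of a $k$-element set, and the number of such subsets is the binomial coefficient $\binom{k}{|S|}$. This gives the stated count $k \choose |S|$.

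There is essentially no obstacle here: the hard work was already done in the theorem characterizing when two such skew-polynomial rings are isomorphic. The one point that deserves a sentence of care is whether $S$ itself should be counted among the isomorphic rings; since $|S| = |S|$, the ring $A_k[x,\Theta_S]$ is of course isomorphic to itself, and the $\binom{k}{|S|}$ count does include $S' = S$, consistent with the phrasing "isomorphic to a given $A_k[x,\Theta_S]$." I would state this explicitly so the reader sees that the count is inclusive. The proof is then a single short paragraph applying the theorem and reading off the binomial coefficient.

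\begin{proof}
By the previous theorem, $A_k[x,\Theta_{S'}] \cong A_k[x,\Theta_S]$ if and only if $|S'| = |S|$. Thus the skew-polynomial rings isomorphic to $A_k[x,\Theta_S]$ correspond exactly to the subsets $S' \subseteq \{1,2,\dots,k\}$ with $|S'| = |S|$, that is, to the $|S|$-element subsets of a $k$-element set. The number of such subsets is $\binom{k}{|S|}$, which gives the result.
\end{proof}
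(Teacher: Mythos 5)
Your proof is correct and takes essentially the same route as the paper: both reduce the isomorphism question to the condition $|S'|=|S|$ via the preceding theorem and then count the $\binom{k}{|S|}$ subsets of that cardinality. The paper's own proof is just a one-line version of the same argument.
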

\begin{proof}
It is immediate since the number of subsets of $\{1,\dots,k\}$ with cardinality $|S|$ is $k\choose |S|$.
\end{proof}

\section{Characterizations of $\Theta_S-$cyclic codes}

\subsection{First characterization}

Let $S \subseteq \{ 1, 2, \ldots, k \}$ and let $\Sigma_S
= \tau_S \circ T^{2^k}$ be the permutation on elements of ${\mathbb{F}}%
_2^{n2^k}$ where $T $ is the cyclic shift modulo $n2^k$ and $\tau_S$ is the
permutation on elements of ${\mathbb{F}}_2^{n2^k}$ defined for all
\[
\x=(x_1^{1},\ldots, x_{_{2^k}}^{1},x_1^{2},\ldots,
x_{_{2^k}}^{2},\ldots,x_1^{n},\ldots, x_{2^k}^{n}) \in {\mathbb{F}}_2^{n2^k,}
\] by
\[
\begin{aligned}
\tau_S({\mathbf{x}}) &= \tau_S((x_1^{1},\ldots, x_{2^k}^{1},x_1^{2},\ldots,
x_{2^k}^{2}, \ldots,x_1^{n},\ldots, x_{2^k}^{n})) \\
&= ( \Sigma_{S,k}(x^{1}),\Sigma_{S,k}(x^{2}),
\ldots, \Sigma_{S,k}(x^{n}) )
\end{aligned}
\]
where $x^j = (x_1^{j}, \ldots, x_{2^k}^{j})$. Since $T^{2^k}$ and $
\tau_S$ commute, $\Sigma_S$  can be written as $T^{2^k} \circ \tau_S$ as well.
 Let $\sigma_S$ denote permutation on $\{1,2,\ldots,n2^k\},$ the indices of elements in
${\mathbb{F}}_2^{n2^k},$ that induce
the permutation $\Sigma_S$ above. 

\begin{lem}
\label{lem0}
Let $C$ be a code in $A_k^n$.
The code $\Phi_k(C)$ is fixed by the permutation $\Sigma_S $
if and only if $C$ is a $\Theta_S-$cyclic code.
\end{lem}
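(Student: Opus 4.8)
The plan is to reduce the $\Theta_S$-cyclic condition on $C \subseteq A_k^n$ to a permutation-invariance statement on the binary image $\Phi_k(C) \subseteq \FF_2^{n2^k}$, using the key relation \eqref{olfa}, namely $\Sigma_{S,k}(\Phi_k(x)) = \Phi_k(\Theta_S(x))$, which I would apply coordinate-by-coordinate. First I would fix a codeword $\vc = (c_0,\ldots,c_{n-1}) \in A_k^n$ and compute explicitly what $\Phi_k$ does to the $\Theta_S$-shift $T_{\Theta_S}(\vc) = (\Theta_S(c_{n-1}),\Theta_S(c_0),\ldots,\Theta_S(c_{n-2}))$. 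Since $\Phi_k$ acts on $A_k^n$ blockwise, sending the $j$-th coordinate $c_j \in A_k$ to the length-$2^k$ block $\Phi_k(c_j)$, I would write $\Phi_k(\vc) = (\Phi_k(c_0),\ldots,\Phi_k(c_{n-1}))$ as a concatenation of $n$ blocks of length $2^k$, matching exactly the block structure $\x = (x^1,\ldots,x^n)$ used to define $\tau_S$ and $\Sigma_S$.

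The heart of the argument is then a direct calculation showing
\[
\Phi_k(T_{\Theta_S}(\vc)) = \Sigma_S(\Phi_k(\vc)).
\]
To see this, apply $\Phi_k$ to each coordinate of $T_{\Theta_S}(\vc)$: the $j$-th block of the left-hand side is $\Phi_k(\Theta_S(c_{j-1 \bmod n}))$, which by \eqref{olfa} equals $\Sigma_{S,k}(\Phi_k(c_{j-1 \bmod n}))$. On the right-hand side, $\Sigma_S = \tau_S \circ T^{2^k}$ first cyclically shifts the binary vector by $2^k$ positions — which is exactly a block-shift moving block $c_{j-1}$ into block-position $j$ — and then applies $\Sigma_{S,k}$ within each block via $\tau_S$. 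Comparing block-by-block gives the claimed identity. I would carry out this comparison carefully for the block that wraps around (the $c_{n-1}$ term landing in the first block), since that is where the cyclic nature of both $T^{2^k}$ and the $\Theta_S$-shift must be seen to agree.

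With that identity in hand, the lemma is immediate. Since $\Phi_k$ is a bijection, $C$ is closed under $T_{\Theta_S}$ if and only if $\Phi_k(C)$ is closed under $\Sigma_S$: indeed $\vc \in C \Rightarrow T_{\Theta_S}(\vc) \in C$ translates, via the identity, to $\Phi_k(\vc) \in \Phi_k(C) \Rightarrow \Sigma_S(\Phi_k(\vc)) \in \Phi_k(C)$, and conversely using bijectivity of $\Phi_k$. Because $\Sigma_S$ is a permutation (of finite order) on a finite set, invariance under $\Sigma_S$ in the forward direction is equivalent to being fixed as a set by $\Sigma_S$, so closure under a single application suffices; together with the fact that $C$ is a submodule exactly when $\Phi_k(C)$ is a binary linear code (by linearity of $\Phi_k$), this yields that $C$ is $\Theta_S$-cyclic iff $\Phi_k(C)$ is fixed by $\Sigma_S$.

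The main obstacle I anticipate is purely bookkeeping: verifying that $T^{2^k}$ on $\FF_2^{n2^k}$ implements precisely the block-level cyclic shift that matches the $\Theta_S$-shift's re-indexing of the coordinates $c_j$, and that $\tau_S$ contributes exactly the within-block twist $\Sigma_{S,k}$ with no interference from the block shift. Establishing that $T^{2^k}$ and $\tau_S$ commute (already asserted in the setup) and tracking the wrap-around block is the delicate part; once the block-structure alignment is pinned down, everything else follows from \eqref{olfa} and the bijectivity and linearity of $\Phi_k$.
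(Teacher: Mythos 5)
Your proposal is correct and follows essentially the same route as the paper's proof: both reduce the statement to the intertwining identity $\Sigma_S\circ\Phi_k = \Phi_k\circ T_{\Theta_S}$ on $A_k^n$, verified blockwise by combining the block cyclic shift $T^{2^k}$ with the within-block relation $\Sigma_{S,k}(\Phi_k(x)) = \Phi_k(\Theta_S(x))$ of Equation~(\ref{olfa}), and then conclude using the bijectivity and linearity of $\Phi_k$. The only cosmetic difference is that you isolate the identity once and deduce both implications from it, whereas the paper carries out the same block computation separately in each direction.
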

\begin{proof}
Let $C$ be a $\Theta_S-$cyclic code and let $ \y= (y_1,y_2,\ldots,y_{n2^k}) \in \Phi_k(C)$.
That is, there exists  $ \x=(x_1,x_2,\ldots,x_{n})\in C $
such that     $ \y=  (\Phi_k(x_1),\Phi_k(x_2),\ldots,\Phi_k(x_n))$
and $\Phi_k(x_i)\in {\mathbb{F}}_2^{2^k}$.

We have that
\[
\begin{aligned}
\Sigma_S(\y) & = \tau_S \circ T^{2^{k}}(  (\Phi_k(x_1),\Phi_k(x_2),\ldots,\Phi_k(x_n)))\\
      & =  \tau_S (  (\Phi_k(x_n),\Phi_k(x_1),\ldots,\Phi_k(x_{n-1})))\\
  &= (\Sigma_{S,k}(\Phi_k(x_n)),  \Sigma_{S,k}
  (\Phi_k(x_1)), \ldots, \Sigma_{S,k}\Phi_k(x_{n-1}))\\
  &= \Phi_k(\Theta_S (x_n)),  \Phi_k(\Theta_S (x_1)),\ldots,\Phi_k(\Theta_S (x_{n-1}
))) \\
   &= \Phi_k(\Theta_S (\x))\in \Phi_k( C),
\end{aligned}
\]
since $C$ is  a $\Theta_S-$cyclic code.

Let   $C^\prime$ be a code  in $\F_2^{n2^k}$
that is fixed  by  the  permutation $ \Sigma_S$
and let $ \x= (x_1,x_2,\ldots,x_{n})
\in \Phi_k^{-1} (C^\prime)$.  We have that  there exists
   $ \y= ( x_{1}^{1},\ldots,
x_{1}^{2^k}, x_{2}^{1}, \ldots, x_{1}^{2^k}, \ldots, x_{n}^{1}, \ldots,
x_{n}^{2^k})\in C^\prime$, such that
$$
\y= \Phi_{k}((x_1,x_2,\ldots,x_{n})) =
(\Phi_k(x_1),\Phi_k(x_2), \ldots,\Phi_k(x_n)).
$$

Since $C^\prime$ is fixed by  the  permutation $ \Sigma_S $ and
$\Phi_k(x_i)\in {\mathbb{F}}_2^{2^k}$, we have that for $\Sigma_S(\y) \in C^\prime$:
\[
\begin{aligned}
\Sigma_S(\y) & =  \tau_S \circ T^{2^{k}}(
(\Phi_k(x_1),\Phi_k(x_2),\ldots,\Phi_k(x_n)))\\
  &= \tau_S (
(\Phi_k(x_n),\Phi_k(x_1),\ldots,\Phi_k(x_{n-1})))\\
  &=
(\Sigma_{S,k}(\Phi_k(x_n)),  \Sigma_{S,k}(\Phi_k(x_1)),\ldots,
\Sigma_{S,k}\Phi_k(x_{n-1}))\\
   &= (\Phi_k(\Theta_{S}(x_n)),
\Phi_k(\Theta_{S}(x_1)),\ldots,\Phi_k(\Theta_{S}(x_{n-1}))\\
 &= \Phi_k(\Theta_{S}(\x)).
\end{aligned}
\]
It follows that $
\Phi_k(\Theta_{S}(\x)) \in C^\prime$ and thus
  $\Phi_{k}^{-1}(C^\prime)$ is a $\Theta_S-$cyclic code .
\end{proof}

Next, we consider the order of the permutation $\sigma_S$ which we will need in Theorem~\ref{oioi}.

\begin{lem} \label{2and2'}
\label{lem1} Let $ord(\sigma_S)$ denote the order of $\sigma_S.$
Then
\begin{equation}
ord(\sigma_S) = \left\{ \begin{array}{cc}
n & n \equiv 0 \pmod{2} \\
2n & n \equiv 1 \pmod{2}  \\
\end{array}. \right.
\end{equation}
\end{lem}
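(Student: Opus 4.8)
The plan is to compute $ord(\sigma_S)$ by exploiting the product structure of the induced permutation $\Sigma_S = \tau_S \circ T^{2^k}$, noting that $\sigma_S$ and $\Sigma_S$ have the same order. First I would index the $n2^k$ coordinates by pairs $(j,a)$ with $j\in\{1,\dots,n\}$ the block index and $a\in\{1,\dots,2^k\}$ the position inside the $j$-th block of length $2^k$, so that the global index of $(j,a)$ is $(j-1)2^k+a$. With this bookkeeping, a short computation shows $T^{2^k}$ sends $(j,a)$ to $(j+1\bmod n,\,a)$, i.e. it cyclically permutes the blocks while fixing the position within a block, whereas $\tau_S$ sends $(j,a)$ to $(j,\,\sigma_{S,k}(a))$, applying the same within-block permutation $\sigma_{S,k}$ to every block while fixing the block index.

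Since these two permutations act on the independent coordinates $j$ and $a$ of the product set $\{1,\dots,n\}\times\{1,\dots,2^k\}$, they commute (as already observed in the text) and their composition is the product permutation
\[
\sigma_S(j,a)=(j+1\bmod n,\,\sigma_{S,k}(a)), \qquad \text{hence} \qquad \sigma_S^m(j,a)=(j+m\bmod n,\,\sigma_{S,k}^m(a)).
\]
Therefore $\sigma_S^m=\mathrm{id}$ holds if and only if $j+m\equiv j\pmod n$ for all $j$ and $\sigma_{S,k}^m=\mathrm{id}$, that is, if and only if $n\mid m$ and $ord(\sigma_{S,k})\mid m$. Consequently $ord(\sigma_S)=\mathrm{lcm}\big(n,\,ord(\sigma_{S,k})\big)$.

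It then remains to identify $ord(\sigma_{S,k})$. By relation~\eqref{olfa} we have $\Sigma_{S,k}(\Phi_k(x))=\Phi_k(\Theta_S(x))$ for all $x\in A_k$; since $\Phi_k$ is a bijection and $\Theta_S$ is a nontrivial involution of $A_k$ for $S\neq\emptyset$, the permutation $\sigma_{S,k}$ inducing $\Sigma_{S,k}$ is itself a nontrivial involution, so $ord(\sigma_{S,k})=2$. Substituting this gives $ord(\sigma_S)=\mathrm{lcm}(n,2)$, which equals $n$ when $n$ is even and $2n$ when $n$ is odd, as claimed.

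The main point to make airtight is the product decomposition: one must check carefully that $T^{2^k}$ genuinely preserves the within-block coordinate $a$ (the computation with the global index $(j-1)2^k+a$, including the wrap-around from block $n$ to block $1$) and that $\tau_S$ genuinely preserves the block coordinate $j$. Only once this is established does the order split cleanly as the least common multiple of the orders of the block shift and the within-block involution, with no possibility of a partial cancellation between a block shift and a within-block move.
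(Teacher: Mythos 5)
Your proof is correct, and it takes a genuinely different route from the paper's. The paper simply exhibits the cycle decomposition of $\sigma_S$ --- $2^{k-1}$ cycles of length $2n$ when $n$ is odd, $2^k$ cycles of length $n$ when $n$ is even --- and reads the order off from that; no justification of the cycle lengths is given beyond ``it is clear.'' You instead identify $\{1,\dots,n2^k\}$ with the product set $\{1,\dots,n\}\times\{1,\dots,2^k\}$, observe that $T^{2^k}$ and $\tau_S$ act on the two factors independently, and compute $ord(\sigma_S)=\mathrm{lcm}\bigl(n,\,ord(\sigma_{S,k})\bigr)=\mathrm{lcm}(n,2)$. This buys you two things. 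First, your argument needs only that $\sigma_{S,k}$ is a nontrivial involution, which you get cleanly from $\Sigma_{S,k}=\Phi_k\circ\Theta_S\circ\Phi_k^{-1}$; the paper's asserted cycle structure (every cycle of length exactly $2n$ in the odd case) actually requires the stronger, unproved fact that $\sigma_{S,k}$ is fixed-point-free, whereas for the order alone a single $2$-cycle in $\sigma_{S,k}$ suffices. Second, your proof makes visible the implicit hypothesis $S\neq\emptyset$: for $S=\emptyset$ one has $\sigma_{S,k}=\mathrm{id}$ and $ord(\sigma_S)=n$ even when $n$ is odd, so the lemma as stated is false in that degenerate case --- a caveat the paper's proof conceals. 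The one point you rightly flag as needing care, that $T^{2^k}$ preserves the within-block coordinate and $\tau_S$ preserves the block coordinate, is a routine index computation, and any convention mismatch between a permutation and its inverse is harmless since they have the same order.
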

\begin{proof}
It is clear that  for $ n$ odd, the permutation $ \sigma_S $  is  a product of $2^{k-1}$  cycles of length $2n$  and we have
\[
\begin{aligned}
\sigma_S &=  ( 1, \sigma_S(1), \sigma_S^2(1),\ldots,\sigma_S^{2n-1}(1)) \\
& \quad ( 2, \sigma_S(2), \sigma_S^2(2),\ldots,\sigma_S^{2n-1}(2)) \cdots \\
& \quad ( 2^{k-1}, \sigma_S(2^{k-1}),
\sigma_S^2(2^{k-1}),\ldots,\sigma_S^{2n-1}(2^{k-1})).
\end{aligned}
\]
Furthermore,   if $ n$ is even, we have  $ \sigma_S $  is  a product of $2^{k}$  cycles of length $ n$  and
\[
\begin{aligned}
\sigma_S &= ( 1, \sigma_S(1), \sigma_S^2(1),\ldots,\sigma_S^{n-1}(1)) \\
&\quad ( 2, \sigma_S(2), \sigma_S^2(2), \ldots,\sigma_S^{n-1}(2)) \cdots \\
&\quad ( 2^{{k}}, \sigma_S(2^{k}), \sigma_S^2(2^{k}),\ldots,\sigma_S^{n-1}(2^{k})).
\end{aligned}
\]
\end{proof}

\begin{theorem} \label{oioi}
\label{th1} Let $C$ be a code in $A_k^{n}$.

\begin{enumerate}
\item If $n$ is odd then $C$ is a skew-cyclic code if and only if $\Phi_k(C)$
is equivalent to an additive $2^{k-1}-$quasi-cyclic code $C^{\prime }$ in $
{\mathbb{F}}_{2}^{n2^k}$.

\item If $n$ is even then $C$ is a skew-cyclic code if and only if $\Phi_k(C)$
is equivalent to an additive $2^{k}-$quasi-cyclic code $C^{\prime }$ in ${
\mathbb{F}}_{2}^{n2^k}$.
\end{enumerate}
\end{theorem}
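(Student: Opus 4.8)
The plan is to reduce everything to the two preceding lemmas together with a purely combinatorial comparison of cycle types. By Lemma~\ref{lem0}, $C$ is a $\Theta_S$-cyclic code if and only if $\Phi_k(C)$ is fixed by the coordinate permutation $\Sigma_S$, so it suffices to show that a binary code fixed by $\Sigma_S$ is permutation-equivalent to a quasi-cyclic code of the stated index, with the equivalence realized by a single permutation that serves both directions. First I would record the cycle type of the underlying position permutation $\sigma_S$ from Lemma~\ref{2and2'}: for $n$ odd it is a product of $2^{k-1}$ cycles each of length $2n$, and for $n$ even it is a product of $2^k$ cycles each of length $n$.

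Next I would compare this with the standard quasi-cyclic shift. On ${\mathbb{F}}_2^{n2^k}$ the $\ell$-quasi-cyclic shift is $T^{\ell}$, where $T$ is the full cyclic shift (a single $n2^k$-cycle), so $T^{\ell}$ decomposes into $\gcd(n2^k,\ell)$ cycles each of length $n2^k/\gcd(n2^k,\ell)$. Taking $\ell=2^{k-1}$ gives $2^{k-1}$ cycles of length $2n$, and taking $\ell=2^k$ gives $2^k$ cycles of length $n$. Thus for $n$ odd the permutation $\sigma_S$ has exactly the cycle type of $T^{2^{k-1}}$, and for $n$ even it has exactly the cycle type of $T^{2^k}$.

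Since two permutations of $\{1,\dots,n2^k\}$ are conjugate in the symmetric group precisely when they share the same cycle type, I can fix a position permutation $\pi$ with $\pi\sigma_S\pi^{-1}$ equal to the relevant quasi-cyclic shift, and let $\Pi$ be the coordinate permutation of ${\mathbb{F}}_2^{n2^k}$ it induces. A code is fixed by $\Sigma_S$ if and only if its image under $\Pi$ is fixed by the conjugated permutation, i.e.\ is quasi-cyclic of index $2^{k-1}$ (resp.\ $2^k$). Setting $C'=\Pi(\Phi_k(C))$ yields the chain $C$ is $\Theta_S$-cyclic $\iff$ $\Phi_k(C)$ is fixed by $\Sigma_S$ $\iff$ $C'$ is $2^{k-1}$-quasi-cyclic (resp.\ $2^k$-quasi-cyclic). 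Finally, since the Gray map $\Phi_k$ is ${\mathbb{F}}_2$-linear and $C$ is an $A_k$-submodule, $\Phi_k(C)$ is an ${\mathbb{F}}_2$-subspace, hence an additive code; both additivity and permutation-equivalence to $\Phi_k(C)$ are preserved by $\Pi$, so $C'$ is the required additive quasi-cyclic code.

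The step I expect to be the main obstacle is making the ``if'' direction honest. The phrase ``equivalent to a quasi-cyclic code'' must be read as equivalence through the one fixed permutation $\Pi$ that conjugates $\Sigma_S$ to the standard shift; an arbitrary permutation-equivalence is genuinely weaker, since one could permute a quasi-cyclic code into one fixed by a different permutation of the same cycle type but not by $\Sigma_S$, whose Gray preimage need not be $\Theta_S$-cyclic. Hence the care is to carry the single permutation $\Pi$ through both implications and to invoke the conjugation identity $\Pi\Sigma_S\Pi^{-1}=T^{2^{k-1}}$ (resp.\ $T^{2^k}$), rather than merely matching cycle types after the fact.
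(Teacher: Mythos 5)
Your proposal is correct and rests on the same underlying idea as the paper's proof: reduce, via Lemma~\ref{lem0}, to the statement that $\Phi_k(C)$ is fixed by $\Sigma_S$, and then transport that invariance to invariance under the standard shift $T^{2^{k-1}}$ (resp.\ $T^{2^k}$) by conjugating $\Sigma_S$ with a fixed coordinate permutation. The difference is in how the conjugator is produced. You obtain it abstractly: by Lemma~\ref{lem1} the permutation $\sigma_S$ has $2^{k-1}$ cycles of length $2n$ ($n$ odd) or $2^k$ cycles of length $n$ ($n$ even), which by the $\gcd$ computation is exactly the cycle type of $T^{2^{k-1}}$ (resp.\ $T^{2^k}$), so some $\pi$ conjugates one to the other. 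The paper instead writes the conjugator down explicitly, defining $\sigma_{S_1}(j)=\sigma_S^{2n-1-a}(b)$ for $j=a2^{k-1}+b$ (and the analogous $\sigma_{S_2}$ for even $n$) and verifying the intertwining identity $T^{2^{k-1}}\circ\Sigma_{S_1}=\Sigma_{S_1}\circ\Sigma_S$ by direct coordinate bookkeeping. Your route is shorter for the theorem itself; the paper's explicit permutation is reused later in the construction algorithms (which apply $\Phi_k^{-1}\circ\sigma_{S_i}^{-1}$), so its extra work is not wasted. You are in fact more careful than the printed proof on one point: the paper only carries out the ``only if'' direction, and your remark that ``equivalent to a quasi-cyclic code'' must mean equivalent via the one fixed conjugator $\Pi$ (an arbitrary permutation-equivalence being genuinely weaker) is precisely what makes the ``if'' direction legitimate. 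One caveat you inherit from the paper rather than introduce: the cycle-type count of Lemma~\ref{lem1} implicitly requires $S\neq\emptyset$ when $n$ is odd, since for $S=\emptyset$ the permutation $\sigma_S$ reduces to the $2^k$-fold shift and has order $n$, not $2n$.
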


\begin{proof}
Let $C$ be a skew-cyclic code  in $A_k^{n}$.
 Consider  the case when $n$ is odd, where the
equation in Lemma~\ref{2and2'} holds.
Define the  permutation $\sigma_{S_1}$ by
\begin{equation}\label{eq:3}
 \sigma_{S_1} =
\begin{pmatrix}
      1 &  \ldots & 2^{k-1} & 2^{k-1}+1& \ldots & 2^k &\ldots& (2n-1)2^{k-1}+1&\ldots& n2^k\\
      \sigma_S^{2n-1}(1) & \ldots&\sigma_S^{2n-1}(2^{k-1})& \sigma_S^{2n-2}(1) & \ldots & \sigma_S^{2n-2}(2^{k-1})& \ldots&\sigma_S^0(1) &  \ldots &\sigma_S^0(2^{k}) \\
\end{pmatrix}
\end{equation}
It is clear that for all $1 \leq j \leq n2^k$ such that $j= a 2 ^{k-1} +b$ with $0\leq a\leq 2n-1$ and $1\leq b \leq  2 ^{k-1}$
we have
\begin{equation}\label{eq:4}
 \sigma_{S_1}(j)=\sigma_S^{2n-1-a}(b).
\end{equation}
The permutation $\sigma_{S_1}$ induces the permutation $\Sigma_{S_1}$ acting
on the elements of $\F_{2}^{n2^k}$. For $\v=(v_1,v_2,\ldots,v_{n2^k}) \in \F_{2}^{n2^k}$,
\begin{equation}\label{eq:induce2}
{\Sigma}_{S_1}(\v)=\left(v_{\sigma_{S_1}(1)},v_{\sigma_{S_1}(2)},\ldots,v_{\sigma_{S_1}(n2^k)}\right)\text{.}
\end{equation}
To show that ${\Sigma}_{S_1}(\Phi_k(C))$ is a $2^{k-1}-$quasi-cyclic code,  we must prove that for all codewords
$\v \in \Phi_k(C)$,  we have that
$$T^{2^{k-1}}({\Sigma}_{S_1}(\v))\in {\Sigma}_{S_1}(\Phi_k(C))$$
where $T$ is the vector cyclic
shift.

Since $\Sigma_S(\Phi_k(C))=\Phi_k(C)$ by Lemma \ref{lem1}, we only need to show that
\begin{equation}\label{eq:5}
 T^{2^{k-1}}({\Sigma}_{S_1}(\v))={\Sigma}_{S_1}(\Sigma_S(\v)).
\end{equation}
We start with the right hand side of the equation. By definition,
\begin{equation}\label{eq:6a}
\Sigma_S(\v) = \left(v_{\sigma_S(1)},v_{\sigma_S(2)},\ldots,v_{\sigma_S (2n)}\right)
:= (v'_1,v'_2,\ldots,v'_{n2^k})=\v'.
\end{equation}

Applying $\sigma_{S_1}$ and by Equation (\ref{eq:3}), we have
\begin{align*}
{\Sigma}_{S_1}(\Sigma_S(\v)) &= \left(v'_{\sigma_{S_1}(1)},v'_{\sigma_{S_1}(2)},
\ldots,v'_{\sigma_{S_1}(2^{k-1})},v'_{\sigma_{S_1} (2^{k-1}+1)},
\ldots,v'_{\sigma_{S_1}(2^{k-1})},v'_{\sigma_{S_1} ((2n-1)2^{k-1}+1)},\ldots,v'_{\sigma_{S_1} (n2^{k})}\right)\\
&= \left(v'_{\sigma_S^{2n-1}(1)},v'_{\sigma_S^{2n-1}(2)},\ldots,v'_{\sigma_S^{2n-1}(2^{k-1})},v'_{\sigma_S^{2n-2}(1)},
\ldots,v'_{\sigma_S^{0}(1)},\ldots,v'_{\sigma_S^{0}(2^{k-1})}\right).
\end{align*}
 Now, Equation~(\ref{eq:6a}) allows us to write
 \begin{eqnarray*}
  {\Sigma}_{S_1}(\Sigma_S(\v) ) &=&\left(v_{\sigma_S^{2n}(1)},
  v_{\sigma_S^{2n}(2)},\ldots,v_{\sigma_S^{2n}(2^{k-1})},v_{\sigma_S^{2n-1}(1)},
\ldots,v_{\sigma_S(1)},\ldots,v_{\sigma_S(2^{k-1})}\right) \\
    &=& \left(v_{1},v_{2},\ldots,v_{2^{k-1}},v_{\sigma_S^{2n-1}(1)},
\ldots,v_{\sigma_S^{2n-1}(2^{k-1})},\ldots, v_{\sigma_S(1)},\ldots,v_{\sigma_S(2^{k-1})}\right)
 \end{eqnarray*}
by Lemma~\ref{lem1}.
Since
\begin{align*}
{\Sigma}_{S_1}(\v) & =  \left(v_{\sigma_{S_1}(1)},v_{\sigma_{S_1}(2)},\ldots,
v_{\sigma_{S_1}(2^{k-1})},v_{\sigma_{S_1}(2^{k-1}+1)},\ldots,v_{\sigma_{S_1}(2^{k})},
\ldots,v_{\sigma_{S_1} ((2n-1) 2^{k-1}+1)},\ldots,v_{\sigma_{S_1}(n2^{k})}\right)\\
 &= \left(v_{\sigma_S^{2n-1}(1)},v_{\sigma_S^{2n-1}(2)},\ldots,v_{\sigma_S^{2n-1}(2^{k-1})},v_{\sigma_S^{2n-2}(1)},\ldots,
 v_{\sigma_S^{2n-2}(2^{k-1})},\ldots,v_{\sigma_S^{0}(1)}
 ,\ldots,v_{\sigma_S^{0}(2^{k-1})}\right)\\
 &= \left(v_{\sigma_S^{2n-1}(1)},v_{\sigma_S^{2n-1}(2)},\ldots,v_{\sigma_S^{2n-1}(2^{k-1})},v_{\sigma_S^{2n-2}(1)},\ldots,
 v_{\sigma_S^{2n-2}(2^{k-1})},\ldots,v_{1}
 ,\ldots,v_{2^{k-1}}\right)
\end{align*}
and $\Sigma_S(\Phi_{k}(C))=\Phi_{k}(C)$, we get
\begin{equation*}
T^{2^{k}-1} ({\Sigma}_{S_1}(\v)) = {\Sigma}_{S_1}(\Sigma_S(\v)) \in {\Sigma}_{S_1}(\Phi_{k}(C))\text{.}
\end{equation*}
Thus, ${\Sigma}_{S_1}(\Phi_{k}(C))$ is a $2^{k-1}-$quasi-cyclic code.\\

Now let $n$ be an even integer.
Define the  permutation $\sigma_{S_2}$ by

\begin{equation}\label{eq:9}
 \sigma_{S_2} =
\left(
    \begin{array}{ccccccccccc}
      1 & 2 & \ldots & 2^{k} & 2^{k}+1& \ldots & 2^{k+1} &\ldots& (n-1)2^k +1&\ldots& n2^k\\
      \sigma_S^{n-1}(1) & \sigma_S^{n-1}(2) & \ldots&\sigma_S^{n-1}(2^{k})& \sigma_S^{n-2}(1)
      & \ldots & \sigma_S^{n-2}(2^{k})& \ldots& \sigma_S^0(1) &  \ldots &\sigma_S^0(2^{k}) \\
    \end{array}
 \right).
\end{equation}
Let us denote by  ${\Sigma}_{S_2}$ the permutation induced by  $ \sigma_{S_2} $ acting
on the elements of $\F_{2}^{n2^k}.$  For $\v=(v_1,v_2,\ldots,v_{2n}) \in \F_{2}^{n2^k}$,
\begin{equation}\label{eq:induce2}
{\Sigma}_{S_2}(\v)=\left(v_{\sigma_{S_2}(1)},v_{\sigma_{S_2}(2)},\ldots,v_{\sigma_{S_2}(n2^k)}\right).
\end{equation}
It is clear that for all $1 \leq j \leq n2^k$, $j= a 2 ^{k} +b$ with $0\leq a\leq n-1$ and
 $1\leq b \leq  2 ^{k}$
we have that
\begin{equation}\label{eq:4}
 \sigma_{S_2}(j)=\sigma_S^{n-1-a}(b).
\end{equation}

To show that ${\Sigma}_{S_2}(\Phi_k(C))$ is a $2^{k}-$quasi-cyclic code  we must prove that for all codewords
$\v \in \Phi_k(C)$, $$T^{2^{k}}({\Sigma}_{S_2}(\v))\in {\Sigma}_{S_2}(\Phi_k(C))$$ where $T$ is the vector cyclic
shift. Since $\Sigma_S(\Phi_k(C))=\Phi_k(C)$ by Lemma \ref{lem0}, we only need to show that
\begin{equation}\label{eq:5}
 T^{2^{k}}({\Sigma}_{S_2}(\v))={\Sigma}_{S_2}(\Sigma_S(\v))\text{.}
\end{equation}
We have that
\begin{equation}\label{eq:6}
\Sigma_S(\v) = \left(v_{\sigma_S(1)},v_{\sigma_S(2)},\ldots,v_{\sigma_S (2n)}\right)
:= (v'_1,v'_2,\ldots,v'_{n2^k})=\v'\text{.}
\end{equation}

Applying $\sigma_{S_2}$ and by Equation \ref{eq:induce2}, we have
\begin{align*}
{\Sigma}_{S_2}(\Sigma_S(\v)) &= \left(v'_{\sigma_{S_2}(1)},v'_{\sigma_{S_2}(2)},
\ldots,v'_{\sigma_{S_2}(2^{k})},v'_{\sigma_{S_2} (2^{k}+1)},\ldots,
v'_{\sigma_{S_2}(2^{k})},v'_{\sigma_{S_2} ((n-1)2^{k}+1)},\ldots,v'_{\sigma_{S_2} (n2^{k})}\right)\\
&= \left(v'_{\sigma_S^{n-1}(1)},v'_{\sigma_S^{n-1}(2)},\ldots,v'_{\sigma_S^{n-1}(2^{k})},v'_{\sigma_S^{n-2}(1)},
\ldots,v'_{\sigma_S^{0}(1)},\ldots,v'_{\sigma_S^{0}(2^{k})}\right)
\end{align*}
 Equation(\ref{eq:6}) allows us to write
 \begin{eqnarray*}
  {\Sigma}_{S_2}(\Sigma_S(\v) ) &=&\left(v_{\sigma_S^{n}(1)},v_{\sigma_S^{n}(2)},\ldots,v_{\sigma_S^{n}(2^{k})},v_{\sigma_S^{n-1}(1)},
\ldots,v_{\sigma_S(1)},\ldots,v_{\sigma_S(2^{k})}\right) \\
    &=& \left(v_{1},v_{2},\ldots,v_{2^{k}},v_{\sigma_S^{n-1}(1)},
\ldots,v_{\sigma_S^{n-1}(2^{k})},\ldots, v_{\sigma_S(1)},\ldots,v_{\sigma_S(2^{k})}\right)
 \end{eqnarray*}
by Lemma \ref{lem1}.
Since
\begin{align*}
{\Sigma}_{S_2}(\v) & =  \left(v_{\sigma_{S_2}(1)},v_{\sigma_{S_2}(2)},
\ldots,v_{\sigma_{S_2}(2^{k})},v_{\sigma_{S_2}(2^{k}+1)},\ldots,v_{\sigma_{S_2}(2.2^{k})}
\ldots,v_{\sigma_{S_2} ((n-1) 2^{k}+1)},\ldots,v_{\sigma_{S_2}(n2^{k})}\right)\\
 &= \left(v_{\sigma_S^{n-1}(1)},v_{\sigma_S^{n-1}(2)},\ldots,v_{\sigma_S^{n-1}(2^{k})},v_{\sigma_S^{n-2}(1)},\ldots,
 v_{\sigma_S^{n-2}(2^{k})},\ldots,v_{\sigma_S^{0}(1)}
 ,\ldots,v_{\sigma_S^{0}(2^{k})}\right)\\
 &= \left(v_{\sigma_S^{n-1}(1)},v_{\sigma_S^{n-1}(2)},\ldots,v_{\sigma_S^{n-1}(2^{k})},v_{\sigma_S^{n-2}(1)},\ldots,
 v_{\sigma_S^{n-2}(2^{k})},\ldots,v_{1}
 ,\ldots,v_{2^{k}}\right)
\end{align*}
and $\Sigma_S(\Phi_{k}(C))=\Phi_{k}(C)$, we get
\begin{equation*}
T^{2^{k}} ({\Sigma}_{S_2}(\v)) = {\Sigma}_{S_2}(\Sigma_S(\v)) \in {\Sigma}_{S_2}(\Phi_{k}(C))\text{.}
\end{equation*}
Thus, ${\Sigma}_{S_2}(\Phi_{k}(C))$ is a $2^{k}-$quasi-cyclic code. This completes the proof.
\end{proof}

We provide below two examples of the first characterization of $\Theta_S-$cyclic codes.

\begin{ex}
Let $C_1=\{0000,1010,1000,0010\}.$  We can see that $C$ is a binary $2-$quasi-cyclic code of length $4.$
Now, consider
\begin{itemize}
\item $\Phi_1^{-1}(0000)=(0,0),$
\item $\Phi_1^{-1}(1010)=(\phi_1^{-1}(10),\phi_1^{-1}(10))=(1+v,1+v),$
\item $\Phi_1^{-1}(1000)=(\phi_1^{-1}(10),\phi_1^{-1}(00))=(1+v,0),$
\item $\Phi_1^{-1}(0010)=(\phi_1^{-1}(00),\phi_1^{-1}(10))=(0,1+v).$
\end{itemize}

Then we have $\Phi_1^{-1}(C_1)=\{(0,0),(1+v,1+v),(1+v,0),(0,1+v)\}$ which is a $\Theta_\emptyset-$cyclic code over $A_1.$

\end{ex}

\begin{ex}
Let $C_2=\langle (1,1,1)\rangle$ be a $\Theta_{\{1,2\}}-$cyclic code of length 3 over $A_2.$ Consider,
\begin{itemize}
\item $\Phi_2(111)=(\phi_1(1),\phi_1(1),\phi_1(1),\phi_1(1),\phi_1(1),\phi_1(1))=111111111111,$
\item $\Phi_2(v_1v_1v_1)=(\phi_1(v_1),\phi_1(v_1),\phi_1(v_1),\phi_1(v_1),\phi_1(v_1),\phi_1(v_1))=010101010101,$
\item $\Phi_2(v_2,v_2,v_2)=(\phi_1(0),\phi_1(1),\phi_1(0),\phi_1(1),\phi_1(0),\phi_1(1))=001100110011,$
\item $\Phi_2(v_1v_2,v_1v_2,v_1v_2)=(\phi_1(0),\phi_1(v_1),\phi_1(0),\phi_1(v_1),\phi_1(0),\phi_1(v_1))=000100010001.$
\end{itemize}

We can see that, $\Phi_2(C_2)=\langle 111111111111,010101010101,001100110011,000100010001\rangle$
which is a binary $2-$quasi-cyclic code.
\end{ex}

\subsection{Second characterization}

We also have a characterization of $\Theta_S-$cyclic codes using the map $\Psi_{k,p}$ as follows.

\begin{theorem}
An $A_k$-linear code $C$ is $\Theta_S-$cyclic of length $n$ if and only if
$C=\Psi_{k,p}^{-1}(C_1,\dots,C_s)$, where $C_1,\dots,C_s$ are quasi-cyclic codes of index $2$ which satisfy
\begin{equation}
T_{\Theta_{S'}}(C_i)\subseteq C_{\mu(i)}
\label{mu}
\end{equation}
for some $S'\subseteq S$ and permutation $\mu$.
\label{construct}
\end{theorem}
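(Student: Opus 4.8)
The plan is to treat $\Psi_{k,p}$ as a ring isomorphism $A_k\cong A_p^s$ and to transport the $\Theta_S$-shift across it. Split $S$ as $S'=S\cap\{1,\dots,p\}$ and $S''=S\cap\Omega_p$, so that $\Theta_S=\Theta_{S'}\circ\Theta_{S''}$, where $\Theta_{S'}$ touches only $v_1,\dots,v_p$ (which live inside each $A_p$-coordinate) and $\Theta_{S''}$ touches only $v_{p+1},\dots,v_k$ (which index the $s$ coordinates). I would first record that $\Psi_{k,p}$ is a ring isomorphism onto $A_p^s$: because each $v_i$ is idempotent, the $B$-th component of $\Psi_{k,p}$ is exactly the evaluation $A_p$-algebra homomorphism sending $v_i\mapsto 1$ if $i\in B$ and $v_i\mapsto 0$ otherwise, and $\Psi_{k,p}$ is a bijection. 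This idempotent product structure is what drives the whole argument.

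The central step, which I expect to be the main obstacle, is a commutation lemma describing $\Theta_S$ in the $\Psi_{k,p}$-coordinates: for every $a\in A_k$ and every block $B\subseteq\Omega_p$,
\[
\Psi_{k,p}(\Theta_S(a))_B=\Theta_{S'}\bigl(\Psi_{k,p}(a)_{B\,\triangle\,S''}\bigr),
\]
so that $\Theta_{S''}$ induces the coordinate permutation $\mu\colon B\mapsto B\,\triangle\,S''$ (an involution, since $\mu^2=\mathrm{id}$), while $\Theta_{S'}$ acts entrywise inside each $A_p$-coordinate. This is the exact analogue for $\Psi_{k,p}$ of the identity $\Sigma_{i,k}(\Phi_k(x))=\Phi_k(\Theta_i(x))$ proved earlier for the binary Gray map, and I would prove it by induction on $|S''|$. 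The base case is a single $\Theta_i$ with $i\in\Omega_p$: writing $a=\alpha+\beta v_i$ one checks directly, exactly as in the $s=2$ computation $\Psi(\alpha+\beta v_i)=(\alpha,\alpha+\beta)\mapsto(\alpha+\beta,\alpha)$, that $\Theta_i$ transposes the two blocks paired by the $i$-th coordinate and fixes the $A_p$-part; the inductive step composes these transpositions into the symmetric-difference permutation. For $i\le p$ one checks that $\Theta_i$ commutes with $\Psi_{k,p}$ and acts inside $A_p$, since $w_B$ involves no $v_i$ with $i\le p$.

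For the forward direction, assume $C$ is $\Theta_S$-cyclic. Extending $\Psi_{k,p}$ coordinatewise (in the length-$n$ direction) and regrouping yields an $A_k$-module isomorphism $A_k^n\cong(A_p^n)^s$; I would first show $\Psi_{k,p}(C)=C_1\times\cdots\times C_s$ is a genuine product, with $C_i$ the projection onto block $i$. This uses $A_k$-linearity: the idempotents $e_i\in A_k$ corresponding to the factors of $A_p^s$ satisfy $e_iC\subseteq C$ and $1=\sum_i e_i$, so each $\mathbf c=\sum_i e_i\mathbf c$ with $\Psi_{k,p}(e_i\mathbf c)$ supported in block $i$, forcing $\Psi_{k,p}(C)$ to split as the product. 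Since $\Psi_{k,p}$ commutes with the cyclic shift in the length direction, the commutation lemma turns $T_{\Theta_S}(C)=C$ into the statement that block $i$ of $\Psi_{k,p}(T_{\Theta_S}(\mathbf c))$ equals $T_{\Theta_{S'}}$ applied to block $\mu(i)$ of $\Psi_{k,p}(\mathbf c)$; hence $T_{\Theta_{S'}}(C_{\mu(i)})=C_i$, i.e.\ $T_{\Theta_{S'}}(C_i)\subseteq C_{\mu(i)}$. Applying this twice and using $\mu^2=\mathrm{id}$ together with $\Theta_{S'}^2=\mathrm{id}$ gives $T^2(C_i)\subseteq C_i$, so each $C_i$ is quasi-cyclic of index $2$.

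For the converse, suppose the $C_i$ are index-$2$ quasi-cyclic codes in $A_p^n$ with $T_{\Theta_{S'}}(C_i)\subseteq C_{\mu(i)}$ for $S'=S\cap\{1,\dots,p\}$ and $\mu\colon B\mapsto B\,\triangle\,S''$. Since $\Psi_{k,p}$ is an $A_k$-module isomorphism, $C=\Psi_{k,p}^{-1}(C_1\times\cdots\times C_s)$ is automatically an $A_k$-submodule of $A_k^n$, so it remains to verify closure under $T_{\Theta_S}$. Taking $\mathbf c\in C$ with blocks $b^{(1)},\dots,b^{(s)}$, the commutation lemma gives that block $i$ of $\Psi_{k,p}(T_{\Theta_S}(\mathbf c))$ is $T_{\Theta_{S'}}(b^{(\mu(i))})$, which lies in $T_{\Theta_{S'}}(C_{\mu(i)})\subseteq C_i$; hence $\Psi_{k,p}(T_{\Theta_S}(\mathbf c))\in C_1\times\cdots\times C_s$, so $T_{\Theta_S}(\mathbf c)\in C$ and $C$ is $\Theta_S$-cyclic, completing the equivalence.
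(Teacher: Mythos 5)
Your proposal is correct and follows essentially the same route as the paper: both decompose $C$ via $\Psi_{k,p}$, take $S'$ to be the part of $S$ lying in $\{1,\dots,p\}$, and let the part of $S$ lying in $\Omega_p$ induce the block permutation $\mu$ (the paper's appeal to Equation~(\ref{olfa}) is exactly your commutation lemma). Your write-up is more explicit where the paper merely asserts — in particular the evaluation-map/idempotent justification that $\Psi_{k,p}(C)$ splits as a product, the identification $\mu(B)=B\,\triangle\,(S\cap\Omega_p)$, and the derivation of index-$2$ quasi-cyclicity from $\mu^2=\mathrm{id}$ and $\Theta_{S'}^2=\mathrm{id}$ — but the underlying argument is the same.
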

\begin{proof}
Let $S=\{i_1,\dots,i_t\}$, then take $p=i_j-1$ for some $i_j\in S$.
We will have codes $C_1,\dots,C_s$ over $A_p$  such that
$C=\Psi_{k,p}^{-1}(C_1,\dots,C_s)$. It is clear that $C_1,\dots,C_s$ are quasi-cyclic codes of
index $2$ since $\Theta_S$ is an involution. Now,
for any $i$, let $c_i\in C_i$, and write
\[
c_i=\left(\sum_{D\subseteq B_i}\alpha_D^{(1)},\dots,\sum_{D\subseteq B_i}\alpha_D^{(n)}\right).
\]
Consider,
\[
c_i'=\Psi_{k,p}(0,\dots,0,c_i,0,\dots,0)=\left(\left(\sum_{D\subseteq B_i}\alpha_D^{(1)}\right)
\sum_{B\subseteq \Omega_p,B\supseteq B_i}w_B
,\dots,\left(\sum_{D\subseteq B_i}\alpha_D^{(n)}\right)\sum_{B\subseteq \Omega_p,B\supseteq B_i}w_B\right)\in C
\]
and
\[
T_{\Theta_S}(c_i')=\left(\left(\sum_{D\subseteq B_i}\Theta_{S'}(\alpha_D^{(n)})\right)
\sum_{B\subseteq \Omega_p,B\supseteq B_i}\Theta_S(w_B)
,\dots,\left(\sum_{D\subseteq B_i}\Theta_{S'}(\alpha_D^{(n-1)})\right)\sum_{B\subseteq \Omega_p,B\supseteq B_i}
\Theta_S(w_B)\right)
\]
where $S'=\{i_1,\dots,i_{j-1}\}\subseteq S$. When we consider $\Psi_{k,p}(\Theta_S(w_B))$,
we can think of it is a permutation of $\Psi_{k,p}(w_B)$
as in Equation~(\ref{olfa}), which gives permutation $\mu$. Therefore, we have
\[\Psi_{k,p}(T_{\Theta_S}(c_i'))=(0,\dots,0,T_{\Theta_{S'}}(c_i),0,\dots,0)\]
where $T_{\Theta_{S'}}(c_i)\in C_{\mu(i)}$, as we hope.

 Using the above setting for $p$, $S$, and $S'$, for any $c\in C$, let
\[
c=\left(\sum_{B\subseteq \Omega_p}\alpha_{B}^{(1)},\dots,\sum_{B\subseteq \Omega_p}\alpha_{B}^{(n)}\right).
\]
By Equation~(\ref{olfa}), we have
\[\Psi_{k,p}(\Theta_S(c))=\left(\begin{array}{c}
\sum_{D\subseteq B_{\mu(1)}}\Theta_{S'}(\alpha_D^{(1)}),\dots,
\sum_{D\subseteq B_{\mu(s)}}\Theta_{S'}(\alpha_D^{(1)})\\
\vdots \\
\sum_{D\subseteq B_{\mu(1)}}\Theta_{S'}(\alpha_D^{(n)}),\dots,
\sum_{D\subseteq B_{\mu(s)}}\Theta_{S'}(\alpha_D^{(n)})
\end{array}
\right),\]
which is in $(C_1,\dots,C_s)$ by assumption.
\end{proof}

Two examples for the second characterization of $\Theta_S-$cyclic codes are given below.

\begin{ex}
Let $C_3=\langle (v_2,0), (0,1+v_2), (v_2,1+v_2)\rangle.$ We can check that $C_3$ is
a $\Theta_{\{1,2\}}-$cyclic code of length 2 over $A_2.$ Consider,
\begin{itemize}
\item ${\Psi}_{2,1}(v_2,0)=(0,1,0,0),$
\item ${\Psi}_{2,1}(0,1+v_2)=(0,0,1,0),$
\item ${\Psi}_{2,1}(v_2,1+v_2)=(0,1,1,0),$
\item ${\Psi}_{2,1}(v_1v_2,0)=(0,v_1,0,0),$
\item ${\Psi}_{2,1}(0,v_1+v_1v_2)=(0,0,v_1,0),$
\item ${\Psi}_{2,1}(v_1v_2,v_1+v_1v_2)=(0,v_1,v_1,0).$
\end{itemize}

We have that $C_3={\Psi}_{2,1}^{-1}(C_4,C_5),$ where $C_4=\langle (0,1)\rangle$ and
$C_5=\langle (1,0)\rangle$ which are $2-$quasi-cyclic codes over $A_1.$
Note that, $\mu(1)=2$ and $\mu(2)=1.$ Finally, we can see that
$T_{\Theta_{\{1\}}}(C_1)=C_2=C_{\mu(1)}$ and $T_{\Theta_{\{1\}}}(C_2)=C_1=C_{\mu(2)}.$
\end{ex}

\begin{ex}
Let $C_6=C_7=\{00,11\}$ are binary cyclic codes. Consider,
\begin{itemize}
\item ${\Psi}_{1,0}^{-1}(00,11)=(\Psi_{1,0}^{-1}(01),\Psi_{1,0}^{-1}(01))=(v,v),$
\item ${\Psi}_{1,0}^{-1}(11,00)=(\Psi_{1,0}^{-1}(10),\Psi_{1,0}^{-1}(10))=(1+v,1+v),$
\item ${\Psi}_{1,0}^{-1}(11,11)=(\Psi_{1,0}^{-1}(11),\Psi_{1,0}^{-1}(11))=(1,1).$
\end{itemize}

So, we have ${\Psi}_{1,0}^{-1}(C_6,C_7)=\{(0,0),(v,v),(1+v,1+v),(1,1)\}$
which is a $\Theta_{\{1\}}-$cyclic code over $A_1.$
\label{exprev}
\end{ex}

\section{Construction of $\Theta_S-$cyclic codes over $A_k$}\label{algorithm}

In this section, we illustrate some constructions of skew cyclic codes over $A_k.$

Recall that the ring $A_k$ is a principal ideal ring.  As such it is isomorphic to a direct product of chain rings.
In particular,
$A_k$ is isomorphic as a ring to  $\FF_2^{2^k}$ via the Chinese Remainder Theorem,
see \cite{C-D-D} for a complete description of this.
Let $CRT: \FF_2^{2^k} \rightarrow  A_k$ be this canonical map.  Let $CRT  (C_1, ....,
C_{2^{k}})$ be the code over $A_k$ formed by taking the map from $C_1 \times C_2 \times \dots \times C_{2^k}$
where each $C_i$ is a binary code.

Define the following map
$\Gamma:{\mathbb{F}}_2^{n2^k} \longmapsto {\mathbb{F}}_2^{n}\times {\mathbb{F}}
_2^{n} \times \cdots \times {\mathbb{F}}_2^{n} $
by
\[
\Gamma( x_{1}^{1},\ldots, x_{1}^{2^k}, x_{2}^{1},\ldots,x_{1}^{2^k},
x_{3}^{1},\ldots, x_{n}^{1},\ldots, x_{n}^{2^k})  =
((x_{1}^{1},\ldots, x_{n}^{1}), (x_{1}^{2},\ldots, x_{n}^{2}), \ldots,
(x_{1}^{2^k},\ldots, x_{n}^{2^k})).
\]

For all codes $C$ over $A_k$ with $C = CRT (C_1, \ldots,C_{2^{k}})$ we have that
\[
\Gamma\circ \Phi_k (C) =(C_1,C_2, \ldots, C_{2^{k}}).
\]

\begin{prop}
Let $n$ be an even integer  and  let $C_1,C_2, \ldots, C_{2^{k}}$ be binary cyclic codes in
${\mathbb{F}}_2^{n}.$  Then for all $\ A\subseteq  \{1,2,\dots, k \}$   there  exists a $\Theta_{A}-$cyclic code $C$
in $A_k^{n}.$
\end{prop}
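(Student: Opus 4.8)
The plan is to exploit the Chinese Remainder Theorem decomposition together with the characterization in Theorem~\ref{oioi}, which already relates $\Theta_S-$cyclic codes to quasi-cyclic images of $\Phi_k(C)$. Given binary cyclic codes $C_1, \ldots, C_{2^k}$ in $\F_2^n$, I would first form the code $C = CRT(C_1, \ldots, C_{2^k})$ over $A_k$. The defining property of $\Gamma \circ \Phi_k$ recorded just before the statement gives $\Gamma \circ \Phi_k(C) = (C_1, \ldots, C_{2^k})$, so each binary ``layer'' of the Gray image is exactly one of the prescribed cyclic codes. The goal then reduces to showing that this $C$ is $\Theta_A-$cyclic for every $A \subseteq \{1,2,\dots,k\}$, and by Lemma~\ref{lem0} this is equivalent to showing that $\Phi_k(C)$ is fixed by the permutation $\Sigma_S$ (with $S = A$), or, working at the level of the untwisted shift, that $\Phi_k(C)$ is invariant under the interleaved cyclic-shift structure coming from each coordinate being cyclic.

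The key steps, in order, would be: (1) set $C = CRT(C_1, \ldots, C_{2^k})$ and record $\Gamma \circ \Phi_k(C) = (C_1, \ldots, C_{2^k})$; (2) observe that because $n$ is even, Lemma~\ref{2and2'} tells us $ord(\sigma_S) = n$ and Theorem~\ref{oioi}(2) shows the relevant images are $2^k-$quasi-cyclic; (3) show that applying $\Sigma_A$ to $\Phi_k(C)$ preserves $C$, which I would reduce to verifying that the cyclic shift acting simultaneously on all $2^k$ binary layers sends codewords of $C$ back into $C$, using the fact that each layer $C_i$ is individually cyclic. The even hypothesis is what makes the twisted shift $T_{\Theta_A}$ coincide, after $n$ applications, with an honest cyclic permutation rather than a permutation mixing the layers with a sign/involution twist; this is precisely where $n \equiv 0 \pmod 2$ is used and why the statement restricts to even $n$.

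The main obstacle I anticipate is bookkeeping the interaction between the $CRT$ decomposition and the action of $\Theta_A$: applying $\Theta_A$ coordinatewise permutes the idempotent components of $A_k$ (equivalently, permutes the binary layers $C_i$ among themselves), so I must check that the collection $\{C_1, \ldots, C_{2^k}\}$ is closed under the layer-permutation induced by $\Theta_A$ \emph{as a set of cyclic codes}. Since all $C_i$ lie in $\F_2^n$ and cyclicity is preserved by any coordinate-layer permutation (the permutation acts on which layer, not on the length-$n$ cyclic structure within a layer), this closure is automatic — the involution merely shuffles the layers and each layer is already cyclic. Making this precise, i.e. verifying that $\Sigma_A(\Phi_k(C)) = \Phi_k(C)$ by combining the layer-shuffling effect of $\tau_A$ (from Equation~(\ref{olfa})) with the cyclicity of each $C_i$, is the crux; once that identity holds, Lemma~\ref{lem0} immediately yields that $C$ is $\Theta_A-$cyclic, completing the proof for every $A \subseteq \{1,2,\dots,k\}$.
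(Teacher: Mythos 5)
There is a genuine gap at the crux of your argument, namely the claim that the closure of $\{C_1,\dots,C_{2^k}\}$ under the layer permutation induced by $\Theta_A$ is ``automatic.'' It is not. Writing $C=CRT(C_1,\dots,C_{2^k})$, the twisted shift acts by $T_{\Theta_A}(C)=CRT\bigl(T(C_{\mu(1)}),\dots,T(C_{\mu(2^k)})\bigr)$, where $\mu$ is the permutation of the CRT components induced by $\Theta_A$ (for $A\neq\emptyset$ this $\mu$ is a fixed--point--free involution, since $\Theta_i$ swaps the maximal ideals containing $v_i$ with those containing $1+v_i$). Cyclicity of each $C_i$ kills the $T$, but you are still left needing $C_{\mu(i)}= C_i$ for every $i$ --- the \emph{labelled tuple} must be fixed by $\mu$, not merely each shuffled layer remain a cyclic code. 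For arbitrary cyclic codes this fails: take $k=1$, $A=\{1\}$, $C_1=\{\mathbf{0}\}$ and $C_2=\F_2^n$; then $\Theta_1$ swaps the two components, so $T_{\Theta_1}(C)=CRT(C_2,C_1)\neq CRT(C_1,C_2)=C$, and $C$ is not $\Theta_{\{1\}}$--cyclic. This is exactly the compatibility condition $C_i=C_{\mu(i)}$ that the paper has to impose explicitly in Theorem~\ref{poiuy} for Hermitian self-duality; it cannot be waved away here.

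The paper's proof avoids this entirely by \emph{not} taking $C=CRT(C_1,\dots,C_{2^k})$. Instead it forms the interleaved binary code $C'=\Gamma^{-1}(C_1,\dots,C_{2^k})$, which is $2^k$--quasi-cyclic because each $C_i$ is cyclic (no compatibility among the $C_i$ is needed for this), and then sets $C=\Phi_k^{-1}\circ\Sigma_{S_2}^{-1}(C')$. The permutation $\Sigma_{S_2}^{-1}$, coming from the identity $T^{2^k}\circ\Sigma_{S_2}=\Sigma_{S_2}\circ\Sigma_S$ in the proof of Theorem~\ref{oioi}, is precisely what converts invariance under the honest shift $T^{2^k}$ into invariance under $\Sigma_S$, whence $C$ is $\Theta_A$--cyclic by Lemma~\ref{lem0}. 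The resulting $C$ is in general \emph{not} $CRT(C_1,\dots,C_{2^k})$ but a coordinate-permuted relative of it; that extra scrambling is the whole point, and it is the ingredient your proposal omits. To repair your argument you would either have to insert this permutation or add the hypothesis $C_{\mu(i)}=C_i$, which would weaken the proposition.
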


\begin{proof}
It is clear that $C^\prime= \Gamma^{-1}(C_1,C_2, \ldots, C_{2^{k}})$ is a
$2^{k}-$quasi-cyclic code over $\F_2^{n2^k}$.  It follows from
Theorem~\ref{th1} that
$\Phi_{k}^{-1}\circ \sigma_{S_2} ^{-1}( C^\prime)$ is a $\Theta_{A}-$cyclic code $C$  in $A_k^{n}.$
\end{proof}

We define the map
\begin{equation}
\Gamma_1: {\mathbb{F}}_2^{n2^{k-1}} \longmapsto {\mathbb{F}}_2^{2n}\times
{\mathbb{F}}_2^{2n}\times\cdots \times{\mathbb{F}}_2^{2n},
\end{equation}
with
\begin{multline*}
\Gamma_1( x_{1}^{1},\ldots, x_{1}^{2^{k-1}}, x_{2}^{1}, \ldots,x_{2}^{2^{k-1}},\ldots,
x_{2n}^{1},\ldots, x_{2n}^{2^{k-1}}) \\=( (x_{1}^{1},\ldots,
x_{2n}^{1}), (x_{1}^{2}, \ldots,x_{2n}^{2}), \ldots,
(x_{1}^{2^{k-1}},\ldots, x_{2n}^{2^{k-1}})).
\end{multline*}

\begin{prop}
Let $n$ be an odd integer and let $C_1, C_2,\ldots, C_{2^{k-1}}$ be binary cyclic codes in
${\mathbb{F}}_2^{2n}.$  Then for all $\ S \subseteq  \{1, 2, \ldots, k \}  $ there  exists a $\Theta_{S}-$cyclic code $C$ in
$A_k^{n}$.
\end{prop}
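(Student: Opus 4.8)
The plan is to mirror the argument used for the even case, replacing $\Gamma$ by $\Gamma_1$ and the permutation $\sigma_{S_2}$ by the permutation $\sigma_{S_1}$ coming from the odd branch of Theorem~\ref{th1}. First I would set
\[
C' = \Gamma_1^{-1}(C_1, C_2, \ldots, C_{2^{k-1}}) \subseteq \F_2^{n2^k},
\]
and check that $C'$ is a $2^{k-1}$-quasi-cyclic code. This is precisely where the oddness of $n$ enters: by Lemma~\ref{lem1}, for odd $n$ the permutation $\sigma_S$ decomposes into $2^{k-1}$ cycles of length $2n$, so the correct building blocks are $2^{k-1}$ binary cyclic codes of length $2n$ rather than $2^{k}$ codes of length $n$. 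The map $\Gamma_1$ de-interleaves a vector of length $n2^k = 2n\cdot 2^{k-1}$ into $2^{k-1}$ blocks, each of length $2n$; a simultaneous cyclic shift of every block by one coordinate is exactly the shift $T^{2^{k-1}}$ on $\F_2^{n2^k}$, and since each block lies in a cyclic code $C_j$, the code $C'$ is invariant under $T^{2^{k-1}}$, i.e.\ it is $2^{k-1}$-quasi-cyclic.

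Next I would invoke the odd case of Theorem~\ref{th1}. In its proof the permutation $\sigma_{S_1}$ of Equation~(\ref{eq:3}) was constructed exactly so that the induced map satisfies the conjugacy relation $\Sigma_{S_1}\circ\Sigma_S = T^{2^{k-1}}\circ\Sigma_{S_1}$, equivalently $\Sigma_S = \Sigma_{S_1}^{-1}\circ T^{2^{k-1}}\circ\Sigma_{S_1}$. Setting $D = \Sigma_{S_1}^{-1}(C')$ and using $T^{2^{k-1}}(C')=C'$, one obtains
\[
\Sigma_S(D) = \Sigma_{S_1}^{-1}\bigl(T^{2^{k-1}}(\Sigma_{S_1}(D))\bigr) = \Sigma_{S_1}^{-1}\bigl(T^{2^{k-1}}(C')\bigr) = \Sigma_{S_1}^{-1}(C') = D,
\]
so that $D$ is fixed by $\Sigma_S$. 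By Lemma~\ref{lem0}, $C = \Phi_k^{-1}(D) = \Phi_k^{-1}\circ\sigma_{S_1}^{-1}(C')$ is then a $\Theta_S$-cyclic code in $A_k^n$, which is exactly the asserted existence statement. Note this uses the reverse implication of Theorem~\ref{th1}, namely that a $2^{k-1}$-quasi-cyclic code pulls back through $\Sigma_{S_1}^{-1}$ to a $\Sigma_S$-fixed code; this reverse direction is immediate from the same conjugacy relation.

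I expect the only genuine obstacle to be the bookkeeping in the first paragraph: one must confirm that the coordinate ordering used by $\Gamma_1$ matches the block-indexing convention governing $\sigma_{S_1}$ in Theorem~\ref{th1}, so that ``de-interleave into $2^{k-1}$ blocks of length $2n$'' really corresponds to invariance under $T^{2^{k-1}}$ rather than some other power of $T$. Once the two conventions are aligned, the remainder is the purely formal transport of structure through $\Sigma_{S_1}^{-1}$ and $\Phi_k^{-1}$, exactly as in the even case, and no further computation is required.
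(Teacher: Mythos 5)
Your proposal is correct and follows essentially the same route as the paper: form $C'=\Gamma_1^{-1}(C_1,\ldots,C_{2^{k-1}})$, observe it is $2^{k-1}$-quasi-cyclic, and pull back through $\Phi_k^{-1}\circ\sigma_{S_1}^{-1}$ using the odd case of Theorem~\ref{th1}. You simply make explicit (via the conjugacy relation $\Sigma_{S_1}\circ\Sigma_S=T^{2^{k-1}}\circ\Sigma_{S_1}$ and Lemma~\ref{lem0}) the reverse implication that the paper invokes without comment.
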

\begin{proof}
It is clear that $C^\prime= \Gamma_1^{-1}(C_1,\ldots, C_{2^{k-1}})$ is
a $2^{k-1}-$quasi-cyclic code over $\F_2^{n2^k}$. It follows from Theorem~\ref{th1} that
$\Phi_{k}^{-1}\circ \sigma_{S_1} ^{-1}( C^\prime)$ is a $\Theta_{S}-$cyclic code $C$  in $A_k^{n}.$
\end{proof}

We know describe an algorithm for constructing $\Theta_S-$cyclic codes.

\begin{enumerate}
\item {\bf Construction of $\Theta_{A}-$cyclic codes in $A_k$ of even length}

\begin{enumerate}
\item We consider $C_1, \ldots, C_{2^{k}}$ binary cyclic codes in ${\mathbb{F}}_2^{n}$

\item We apply $\Gamma^{-1}$ and we obtain $C^{\prime }$ a $2^{k}-$
quasi-cyclic code in ${\mathbb{F}}_2^{n2^k}$.

\item We apply $\Phi_{k}^{-1}\circ \sigma_{S_2} ^{-1}$ to $C^{\prime }$. We
obtain a $\Theta_{S}-$cyclic code $C$ in $A_k$.
\end{enumerate}

\item {\bf Construction of $\Theta_S-$cyclic codes in $A_k$ of odd length }

\begin{enumerate}
\item We consider $C_1, \ldots, C_{2^{k-1}}$ binary cyclic codes in ${\mathbb{F}}_2^{2n}$

\item We apply $\Gamma_1^{-1}$ and we obtain $C^{\prime }$ a $2^{k-1}-$
quasi-cyclic code in ${\mathbb{F}}_2^{n2^k}$.

\item We apply $\Phi_{k}^{-1}\circ \sigma_{S_1}^{-1}$ to $C^{\prime }$. We
obtain a $\Theta_{S}-$cyclic code $C$ in $A_k^{n}$.
\end{enumerate}

\item {\bf Construction of $\Theta_S-$cyclic codes over $A_k$ from codes over $A_p$, where $p<k$ }

\begin{enumerate}
\item Given $C_1,\dots, C_s$ quasi-cyclic codes of index $2$ in $A_p$
which satisfy Equation~\ref{mu} in Theorem~\ref{construct}, for
some $S^\prime\subseteq S$.
\item Appling $\Psi_{k,p}$ to $(C_1,\ldots,C_s)$, we obtain a $\Theta_S-$cyclic code over $A_k$.
\end{enumerate}

\end{enumerate}

In terms of skew-polynomial rings, the third construction of a $\Theta_S-$cyclic code above will be as follows.
\begin{prop}
Let $C=\Psi_{k,p}^{-1}(C_1,\dots,C_s)$ be $\Theta_S-$cyclic codes over $A_k$, where $C_1,\dots,C_s$ are codes over $A_{p}$, for some $p< k$.
If $C_i=\langle g_{1_i}(x),\dots,g_{m_i}(x)\rangle$, for all $i=1,\dots,s$, then
\[
C=\langle g_{1_1}(x),\dots,g_{m_1}(x),\ldots,g_{1_s}(x),\dots,g_{m_s}(x)\rangle.
\]
\label{generator}
\end{prop}
\begin{proof}
For any $c(x)\in C$, there exist $c_i(x)\in C_i$, where $1\leq i\leq s$, such that $c(x)=\Psi_{k,p}^{-1}(c_1(x),\dots,c_s(x))$. Now, let
\[
c_j(x)=\sum_{k=1}^{m_j}\alpha_{kj}(x)g_{jk}
\]
for some $\alpha_{kj}(x)\in A_p[x,\Theta_S^\prime]$, for all $j=1,\dots,s$. Then we have
\[
\begin{aligned}
 c(x)  &=  w_{B_1}\left(\sum_{k=1}^{m_1}\alpha_{k1}(x)g_{1k}\right)\\
         & \quad +\dots+w_{B_i}\left(\sum_{k=1}^{m_i}\alpha_{ki}(x)g_{ik} -
           \sum_{B_j\subseteq B_i}\left(\sum_{k=1}^{m_j}\alpha_{kj}(x)g_{jk}\right)\right) \\
         & \quad +\dots+w_{B_s}\left(\sum_{k=1}^{m_s}\alpha_{ks}(x)g_{sk}-\sum_{j=1}^s
           \left(\sum_{k=1}^{m_j}\alpha_{kj}(x)g_{jk}\right)\right)
\end{aligned}
\]
as we hope.
\end{proof}

\section{Self-dual codes}
In this section we will give a characterization for Hermitian self-dual codes over $A_k$.

First we will show that the orthogonal of a $\Theta_S-$cyclic code is again a $\Theta_S-$cyclic code.
We note that the Euclidean inner-product is simply the Hermitian inner-product with $R = \emptyset.$
Hence any proof for the Hermitian inner-product applies as well to the Euclidean inner-product.

\begin{theorem}
If C is a $\Theta_S-$cyclic code then $C^{H_R}$ is a $\Theta_S-$cyclic code for all $R \subseteq \{1,2,\dots,k\}$.
\end{theorem}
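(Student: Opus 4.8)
The plan is to verify the two defining properties of a $\Theta_S-$cyclic code for $C^{H_R}$: that it is an $A_k$-submodule of $A_k^n$, and that it is invariant under the shift $T_{\Theta_S}$. That $C^{H_R}$ is a submodule is immediate and requires no hypothesis on $C$: if $\vw\in C^{H_R}$ and $a\in A_k$, then $[a\vw,\vu]_{H_R}=a[\vw,\vu]_{H_R}=0$ for every $\vu\in C$ by commutativity of $A_k$, and closure under addition is clear. So the entire content lies in showing that $\vw\in C^{H_R}$ implies $T_{\Theta_S}(\vw)\in C^{H_R}$.

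The central step is an adjointness identity that transfers $T_{\Theta_S}$ from one argument of the inner product to the other. Writing $\vw=(w_0,\dots,w_{n-1})$ and $\vu=(u_0,\dots,u_{n-1})$ with indices taken modulo $n$, the $j$-th entry of $T_{\Theta_S}(\vw)$ is $\Theta_S(w_{j-1})$, so
\begin{equation*}
[T_{\Theta_S}(\vw),\vu]_{H_R}=\sum_{j=0}^{n-1}\Theta_S(w_{j-1})\,\Theta_R(u_j).
\end{equation*}
I would then apply the automorphism $\Theta_S$ to this scalar. Since $\Theta_S$ is a ring homomorphism it distributes over the sum and the products; since it is an involution, $\Theta_S(\Theta_S(w_{j-1}))=w_{j-1}$; and $\Theta_S$ commutes with $\Theta_R$ because both are products of the pairwise commuting involutions $\Theta_i$. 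After reindexing by $i=j-1$ this yields
\begin{equation*}
\Theta_S\bigl([T_{\Theta_S}(\vw),\vu]_{H_R}\bigr)=\sum_{i=0}^{n-1}w_i\,\Theta_R\bigl(\Theta_S(u_{i+1})\bigr)=[\vw,T_{\Theta_S}^{-1}(\vu)]_{H_R},
\end{equation*}
where I use that the inverse shift is given by $T_{\Theta_S}^{-1}(\vu)_i=\Theta_S(u_{i+1})$, again because $\Theta_S^2=\mathrm{id}$.

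With this identity established the conclusion follows quickly. Fix $\vw\in C^{H_R}$ and an arbitrary $\vu\in C$. Because $C$ is $\Theta_S-$cyclic it is invariant under $T_{\Theta_S}$, and since $C$ is finite and $T_{\Theta_S}$ is a bijection of $A_k^n$, the set $C$ is invariant under $T_{\Theta_S}^{-1}$ as well; hence $T_{\Theta_S}^{-1}(\vu)\in C$ and $[\vw,T_{\Theta_S}^{-1}(\vu)]_{H_R}=0$. The displayed identity then gives $\Theta_S\bigl([T_{\Theta_S}(\vw),\vu]_{H_R}\bigr)=0$, and since $\Theta_S$ is injective this forces $[T_{\Theta_S}(\vw),\vu]_{H_R}=0$. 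As $\vu$ was arbitrary, $T_{\Theta_S}(\vw)\in C^{H_R}$, as required.

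I expect the main obstacle to be bookkeeping rather than conceptual: correctly tracking the reindexing of the cyclic shift and confirming that $\Theta_S$ and $\Theta_R$ commute, so that a single application of $\Theta_S$ peels the twist off both factors at once. The clean observation to state explicitly is that applying $\Theta_S$ to the scalar $[T_{\Theta_S}(\vw),\vu]_{H_R}$ converts it into $[\vw,T_{\Theta_S}^{-1}(\vu)]_{H_R}$; since $\Theta_S$ fixes $0$ and is injective, one inner product vanishes if and only if the other does. Finally, because the Euclidean inner-product is the Hermitian one with $R=\emptyset$, the same argument simultaneously covers the Euclidean dual.
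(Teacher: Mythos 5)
Your proof is correct and follows essentially the same route as the paper: both arguments exploit the $T_{\Theta_S}$-invariance of $C$ and apply the involution(s) to the zero scalar $[\,\cdot\,,\cdot\,]_{H_R}$ to transfer the skew shift from one argument of the inner product to the other. Your write-up is in fact tidier on the bookkeeping (the explicit adjointness identity with $T_{\Theta_S}^{-1}$, the injectivity of $\Theta_S$, and the submodule check), but the underlying idea is identical.
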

\begin{proof}
Let $S,R \subseteq  \{1,2,\dots,k\}.$  Let $\vu \in C$ and $\vw \in C^{H_R}.$
We know that since $\vu$ is an element of $C$ that $T^i_{\Theta_S}(\vu) \in C$ for all $i.$  Hence we have that
$[ T^i_{\Theta_S}(\vu), \vw]_{H_R} = 0.$

Then consider $\Theta_S(\Theta_R( [ T^i_{\Theta_S}(\vu), \vw]_{H_R} ) )=0.$

\begin{align*}
\Theta_S(\Theta_R( [ T^i_{\Theta_S}(\vu), \vw]_{H_R}) )
&= \Theta_S\left(\Theta_R \left(\sum T^i_{\Theta_S} (u_j) \Theta_R(w_j) \right)\right) \\
&= \Theta_S\left(\Theta_R\left( \sum (\Theta_S(u_j) T^i (\Theta_R(w_{j-i})\right)\right) \\
&= \sum \Theta_R(u_j) T^i \Theta_S(w_{j-i}) \\
&= 0.
\end{align*}
This gives that $T^i_{\Theta_S}(\vw) \in C^{H_R}$ and so $C^{H_R}$ is $\Theta_S-$cyclic.
\end{proof}

Next, we have the following lemma.

\begin{lem}
A code $C$ over $A_k$ is self-dual if and only if $C=CRT(C_1,\dots,C_{2^k})$ and $C_i=C_i^\bot$ for all i.
\end{lem}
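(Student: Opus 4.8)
The plan is to show that the Euclidean inner product on $A_k^n$ decouples along the Chinese Remainder Theorem decomposition, so that the dual of a $CRT$-code is the $CRT$ of the component duals; self-duality then follows componentwise.

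First I would fix the ring isomorphism $CRT:\FF_2^{2^k}\to A_k$ and, for $\vw\in A_k^n$, write $\vw^{(i)}\in\FF_2^n$ for its $i$-th binary component, so that $\vw=CRT(\vw^{(1)},\dots,\vw^{(2^k)})$ coordinate-by-coordinate. By the definition of $CRT(C_1,\dots,C_{2^k})$ (together with the identity $\Gamma\circ\Phi_k(C)=(C_1,\dots,C_{2^k})$), membership is characterized componentwise: $\vw\in C$ if and only if $\vw^{(i)}\in C_i$ for every $i$. Because $CRT$ is a ring isomorphism and $A_k\cong\FF_2^{2^k}$ carries coordinatewise multiplication, for any $\vw,\vu\in A_k^n$ the product $w_ju_j$ has $i$-th component $w_j^{(i)}u_j^{(i)}$; hence the $i$-th component of $[\vw,\vu]=\sum_j w_ju_j$ is exactly the binary inner product $[\vw^{(i)},\vu^{(i)}]$. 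Therefore $[\vw,\vu]=0$ in $A_k$ if and only if $[\vw^{(i)},\vu^{(i)}]=0$ in $\FF_2$ for all $i$.

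Next I would compute $C^\perp$. Since the components of codewords of $C$ vary independently (each $C_i$ is linear and so contains $\mathbf{0}$, so one may prescribe the $i$-th component freely while zeroing the others), the condition $\vw\in C^\perp$ is equivalent to requiring, for each $i$ separately, that $[\vw^{(i)},\vu^{(i)}]=0$ for all $\vu^{(i)}\in C_i$, i.e. $\vw^{(i)}\in C_i^\perp$. This yields the key identity
$$C^\perp = CRT(C_1^\perp,\dots,C_{2^k}^\perp).$$
By the bijectivity of the $CRT$ decomposition, $C=C^\perp$ holds if and only if $CRT(C_1,\dots,C_{2^k})=CRT(C_1^\perp,\dots,C_{2^k}^\perp)$, which happens exactly when $C_i=C_i^\perp$ for every $i$.

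The step I expect to require the most care is the componentwise decoupling of the inner product, together with the independence-of-components argument used to isolate each coordinate in the dual computation; once $CRT$ is treated as a ring isomorphism and the binary components of codewords are seen to range over the $C_i$ independently, the remainder of the argument is formal.
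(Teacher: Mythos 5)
Your argument is correct, but it takes a genuinely different route from the paper: the paper offers no direct proof at all, simply citing Theorem 6.4 of Dougherty--Kim--Kulosman (reference \cite{Dough3}, on MDS codes over finite principal ideal rings), which asserts in general that taking duals commutes with the CRT decomposition. You instead reprove the needed special case from scratch, exploiting that $A_k\cong\FF_2^{2^k}$ is a product of fields under the CRT isomorphism, so that multiplication --- and hence the Euclidean inner product --- decouples coordinatewise; this gives $C^\perp=CRT(C_1^\perp,\dots,C_{2^k}^\perp)$, from which the lemma is immediate. Your version is more elementary and self-contained, at the cost of redoing work the paper outsources; it is essentially the standard proof of the cited theorem specialized to $A_k$. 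One point you should make explicit for the ``only if'' direction: the lemma requires that every linear code $C$ over $A_k$ actually admits a decomposition $C=CRT(C_1,\dots,C_{2^k})$ in the first place. This is true (take $C_i=\pi_i(C)$, and use $1=\sum_i e_i$ with $e_i$ the primitive idempotents of $A_k$ to see $C=\bigoplus_i e_iC$), but you assume it implicitly by starting from the decomposed form, and your remark that the components of codewords ``vary independently'' is exactly the full-product structure of $CRT(C_1,\dots,C_{2^k})$ rather than something that needs the zero codeword to justify.
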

\begin{proof}
 Follows from Theorem 6.4 in \cite{Dough3}.
\end{proof}

Then we also have the following characterization.

\begin{theorem}\label{poiuy}
A code $C$ over $A_k$ satisfies $C=C^{H_T}$ if and only if $C=CRT(C_1,\dots,C_{2^k})$ and $C_i=C_{\mu(i)}$,
for some permutation $\mu$, for all $i$.
\end{theorem}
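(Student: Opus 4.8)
The plan is to push everything through the Chinese Remainder Theorem decomposition $A_k\cong\FF_2^{2^k}$ and reduce the Hermitian condition to a statement about the binary component codes. Recall from the preceding material that the $2^k$ maximal ideals of $A_k$ are the ideals $\mathfrak{m}_j=\langle w_1,\dots,w_k\rangle$ with $w_i\in\{v_i,1+v_i\}$ (Theorem \ref{teoideal}), and that $CRT$ identifies an element $a\in A_k$ with the tuple of its residues $a^{(j)}=a\bmod\mathfrak{m}_j\in\FF_2$. First I would analyze how $\Theta_T$ interacts with this decomposition. Since $\Theta_i$ interchanges $v_i$ and $1+v_i$, the automorphism $\Theta_T$ permutes the maximal ideals, sending $\mathfrak{m}_j$ to $\mathfrak{m}_{\mu(j)}$ for a permutation $\mu$ of $\{1,\dots,2^k\}$; because $\Theta_T$ is an involution, so is $\mu$. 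Using that $\Theta_T$ fixes $\FF_2$ pointwise (Proposition \ref{auto}), the residue computation $\Theta_T(a)\equiv a^{(j)}\pmod{\mathfrak{m}_{\mu(j)}}$ shows that the $CRT$ components satisfy $\Theta_T(a)^{(i)}=a^{(\mu(i))}$.

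Next I would translate the Hermitian inner-product into these coordinates. Writing $\vw,\vu\in A_k^n$ through their component vectors $\vw^{(i)}=(w_1^{(i)},\dots,w_n^{(i)})$ and likewise for $\vu$, the product $[\vw,\vu]_{H_T}=\sum_j w_j\Theta_T(u_j)$ vanishes in $A_k$ if and only if each of its $2^k$ $CRT$ components vanishes in $\FF_2$. By the previous step the $i$-th component equals $\sum_j w_j^{(i)}u_j^{(\mu(i))}$, which is precisely the binary Euclidean pairing $[\vw^{(i)},\vu^{(\mu(i))}]$ on $\FF_2^n$. Hence $\vw\in C^{H_T}$ exactly when $[\vw^{(i)},\vu^{(\mu(i))}]=0$ for every $i$ and every $\vu\in C$.

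Finally, since $C=CRT(C_1,\dots,C_{2^k})$ means that as $\vu$ ranges over $C$ its component $\vu^{(\mu(i))}$ ranges over all of $C_{\mu(i)}$ independently of the other components, the condition above says $\vw^{(i)}\in C_{\mu(i)}^{\perp}$ for each $i$. Therefore $C^{H_T}=CRT(C_{\mu(1)}^{\perp},\dots,C_{\mu(2^k)}^{\perp})$, and comparing component by component gives that $C=C^{H_T}$ if and only if $C_i=C_{\mu(i)}^{\perp}$ for all $i$ (which recovers the preceding lemma when $T=\emptyset$, forcing $\mu=\mathrm{id}$ and $C_i=C_i^{\perp}$). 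The converse direction is immediate from the same formula. I expect the main obstacle to be the careful bookkeeping in the first step: pinning down the permutation $\mu$ induced by $\Theta_T$ on the set of maximal ideals and verifying that the identification of residue fields with $\FF_2$ makes the automorphism act as a pure coordinate permutation with no twist, so that conjugation-by-$\Theta_T$ becomes an honest reindexing of the binary components.
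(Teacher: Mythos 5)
Your proposal is correct and follows essentially the same route as the paper: both reduce $C^{H_T}=(\Theta_T(C))^{\perp}$ through the CRT decomposition, observe that $\Theta_T$ permutes the $2^k$ components by some involution $\mu$ (you justify this via its action on the maximal ideals $\langle w_1,\dots,w_k\rangle$, the paper via the Gray-map identity $\Sigma_{T,k}(\Phi_k(x))=\Phi_k(\Theta_T(x))$), and then take Euclidean duals componentwise to get $C^{H_T}=CRT(C_{\mu(1)}^{\perp},\dots,C_{\mu(2^k)}^{\perp})$. One point worth flagging: what your argument (and indeed the paper's own ``the result follows'') actually establishes is the condition $C_i=C_{\mu(i)}^{\perp}$, not the printed condition $C_i=C_{\mu(i)}$; the statement as printed cannot be right as it stands (for $T=\emptyset$ it would make every code Euclidean self-dual), so your corrected form, which specializes to the preceding lemma when $T=\emptyset$, is the one to keep.
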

\begin{proof}
Recall that $C^{H_T}=(\Theta_T(C))^\bot$. By Equation~\ref{olfa}, we can say that the map $\Theta_T$
induces a permutation $\mu$ on the coordinates
of $\Phi_k(A_k)$. So, if
$C=CRT(C_1,\dots,C_{2^k})$ then
\[\Theta_T(C)=CRT(C_{\mu(1)},\dots,C_{\mu(2^k)}),\]
and the result follows.
\end{proof}

As an easy consequence, we have the following corollary.

\begin{cor}
A $\Theta_S-$cyclic code  $C$ is a Hermitian self-dual code with respect to $\Theta_T$ if and only if
$C=CRT(C_1,\dots,C_{2^k})$,   $C_i=C_{\mu(i)}$ for all $i$, where $\mu$ is a permutation induced by $\Theta_T$,
and $\Phi_k(C)$ is fixed by the permutation $\Sigma_S.$
\end{cor}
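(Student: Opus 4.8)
The plan is to recognize that this corollary is nothing more than the logical conjunction of the two characterizations immediately preceding it, so the proof consists of intersecting their hypotheses rather than carrying out any fresh argument. Being a $\Theta_S-$cyclic Hermitian self-dual code (with respect to $\Theta_T$) means exactly two things at once: that $C=C^{H_T}$, and that $C$ is $\Theta_S-$cyclic. The strategy is to dispatch each of these with the result already tailored to it and then combine.

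First I would apply Theorem~\ref{poiuy}, which states that $C=C^{H_T}$ if and only if $C=CRT(C_1,\dots,C_{2^k})$ with $C_i=C_{\mu(i)}$ for all $i$, where $\mu$ is the permutation on the coordinates of $\Phi_k(A_k)$ induced by $\Theta_T$. This handles the self-duality half of the statement verbatim and produces the first two of the three listed conditions. Next I would apply Lemma~\ref{lem0}, which states that $C$ is a $\Theta_S-$cyclic code if and only if $\Phi_k(C)$ is fixed by the permutation $\Sigma_S$. This handles the $\Theta_S-$cyclicity half and produces the third condition.

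Conjoining these two equivalences immediately yields the corollary: $C$ is a $\Theta_S-$cyclic Hermitian self-dual code if and only if all three conditions hold simultaneously. Since both ingredients are already established, there is no substantive obstacle; the only point deserving a moment's attention is verifying that the two conditions are genuinely compatible on a single code, i.e.\ that the CRT-decomposition constraint of Theorem~\ref{poiuy} and the binary-image constraint of Lemma~\ref{lem0} may be imposed together without conflict. This compatibility follows because they describe $C$ through two coherent representations—its Chinese Remainder components $C_1,\dots,C_{2^k}$ on the one hand and its Gray image $\Phi_k(C)$ on the other—so nothing prevents a code from satisfying both at once, and the equivalence passes through unchanged.
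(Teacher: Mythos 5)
Your proposal is correct and matches the paper's own proof, which likewise simply cites Theorem~\ref{poiuy} for the Hermitian self-duality condition and Lemma~\ref{lem0} for the $\Theta_S-$cyclicity condition and conjoins them. Your additional remark on the compatibility of the two descriptions is harmless but not needed, since both are just equivalent reformulations of properties of the same code $C$.
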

\begin{proof}
Follows from  Lemma~\ref{lem0} and Theorem~\ref{poiuy}.
\end{proof}

Here is an example of a self-dual code.

\begin{ex}
Let $C_8=\{(0,0),(v,v),(1+v,1+v),(1,1)\}$ is a $\Theta_{\{1\}}-$cyclic code over $A_1.$
Furthermore, if $T=\{1\},$ then $C_8^{H_T}=C_8.$
Using the calculation in Example~\ref{exprev},
we have that $C_8=\overline{\Psi}_{1,0}^{-1}(C,C'),$ where $C=C'=\{00,11\}$
which are two binary cyclic self-dual codes.
\end{ex}

\section{Optimal codes}

In this section, we will give a way to construct optimal $\Theta_S-$cyclic
codes using Proposition~\ref{generator} and some examples of optimal codes obtained using this technique.
First, we need the following lemma.

\begin{lem}
Let $C$ be a linear code over $A_k$. Then,
if $C=\Psi_{k,p}^{-1}(C_1,\dots,C_s)$, for some codes $C_1,\dots,C_s$  over $A_{p}$, for some $p$, then
$d_H(C)=\min_{1\leq i\leq s} d_H(C_i)$.
\label{jarak}
\end{lem}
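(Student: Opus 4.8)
The plan is to exploit that $\Psi_{k,p}$ is a bijection, so that a coordinate of a codeword over $A_k$ vanishes precisely when all $s$ of its images over $A_p$ vanish; this lets me compare supports directly. Write the extended map $\Psi_{k,p}:A_k^n\to (A_p^n)^s$ and let $\pi_i:C\to C_i$ be the projection onto the $i$-th block, so that $C=\Psi_{k,p}^{-1}(C_1,\dots,C_s)$ is the preimage of the product $C_1\times\cdots\times C_s$ and each $C_i=\pi_i(C)$ is $A_p$-linear (being the image of the linear code $C$ under a linear projection). For $c=(c^{(1)},\dots,c^{(n)})\in C$ and each $j$, applying $\Psi_{k,p}$ to the coordinate $c^{(j)}\in A_k$ yields $(d_1^{(j)},\dots,d_s^{(j)})\in A_p^s$, so that $\pi_i(c)=(d_i^{(1)},\dots,d_i^{(n)})$. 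Since $\Psi_{k,p}$ is a bijection sending $0$ to $0$, we have $c^{(j)}\neq 0$ if and only if $d_i^{(j)}\neq 0$ for at least one $i$; hence
\[
\mathrm{supp}(c)=\bigcup_{i=1}^{s}\mathrm{supp}(\pi_i(c)),
\]
and in particular $\mathrm{supp}(\pi_i(c))\subseteq\mathrm{supp}(c)$ for every $i$.

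Because all codes involved are linear, $d_H(C)$ equals the least Hamming weight of a nonzero codeword, and likewise for each $C_i$; write $\mathrm{wt}_H$ for Hamming weight. First I would prove $d_H(C)\geq\min_i d_H(C_i)$. Given any nonzero $c\in C$, bijectivity forces some $\pi_{i}(c)$ to be nonzero, and then
\[
\mathrm{wt}_H(c)=|\mathrm{supp}(c)|\geq|\mathrm{supp}(\pi_{i}(c))|=\mathrm{wt}_H(\pi_{i}(c))\geq d_H(C_{i})\geq\min_{1\leq i\leq s}d_H(C_i).
\]
Minimizing the left-hand side over all nonzero $c\in C$ gives the desired inequality.

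For the reverse inequality, let $i_0$ attain $\min_i d_H(C_i)$ and choose a nonzero $c_{i_0}\in C_{i_0}$ of weight $d_H(C_{i_0})$. Since $C$ is the full preimage of the product $C_1\times\cdots\times C_s$, the tuple $(0,\dots,0,c_{i_0},0,\dots,0)$ lies in that product, and its preimage $c=\Psi_{k,p}^{-1}(0,\dots,c_{i_0},\dots,0)\in C$ satisfies $\pi_{i_0}(c)=c_{i_0}$ and $\pi_i(c)=\0v$ for $i\neq i_0$. By the support identity, $\mathrm{supp}(c)=\mathrm{supp}(c_{i_0})$, so $\mathrm{wt}_H(c)=d_H(C_{i_0})=\min_i d_H(C_i)$, whence $d_H(C)\leq\min_i d_H(C_i)$. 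Combining the two inequalities yields the claim. The one point that needs care — and the main thing to justify cleanly — is that a single block $c_{i_0}$ can really be realized with all other blocks zero; this is exactly the assertion that $\Psi_{k,p}^{-1}(C_1,\dots,C_s)$ is the preimage of the direct product $C_1\times\cdots\times C_s$, so that the blocks may be chosen independently. The support decomposition that drives both inequalities, in turn, rests entirely on $\Psi_{k,p}$ being a bijection on $A_k$.
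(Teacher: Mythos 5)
Your proof is correct. The paper itself gives no argument at all here: it simply cites Lemma 6.2 of Dougherty--Kim--Kulosman (\emph{MDS Codes over Finite Principal Ideal Rings}), which is the statement $d_H(CRT(C_1,\dots,C_s))=\min_i d_H(C_i)$ for the Chinese Remainder decomposition of a code over a principal ideal ring; applying it requires one to recognize $\Psi_{k,p}$ as (a permutation-equivalent form of) the partial CRT isomorphism $A_k\cong A_p^{s}$. You instead reprove the fact from scratch, and your argument isolates exactly what makes it work: $\Psi_{k,p}$ is an $A_p$-linear bijection on $A_k$ fixing $0$, applied coordinatewise, so $\mathrm{supp}(c)=\bigcup_i\mathrm{supp}(\pi_i(c))$, which gives $d_H(C)\ge\min_i d_H(C_i)$, while the product structure of $C_1\times\cdots\times C_s$ (together with $0\in C_i$, which you correctly secure by noting each $C_i=\pi_i(C)$ is linear) lets you realize a minimum-weight word of the worst block with all other blocks zero, giving the reverse inequality. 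Your route is more elementary and slightly more general than the paper's --- it never uses that $\Psi_{k,p}$ is a ring isomorphism or CRT map, only that it is a linear bijection --- at the cost of being longer than a one-line citation; the one step you rightly flag as needing care, that a single block can be realized with the others zero, is precisely the independence of blocks in the direct product, and you justify it adequately.
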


\begin{proof}
 Follows from Lemma 6.2 in \cite{Dough3}.
\end{proof}

This means that we can use the optimal binary codes or optimal $\Theta_{S'}-$cyclic codes over $A_p$
to construct  optimal $\Theta_{S}-$cyclic
codes over $A_k$ for all $k$ with respect to the Hamming weight. There are tables for binary optimal Euclidean self-dual
codes in \cite{con} and for optimal Hermitian
self-dual codes and Hermitian Type IV self-dual codes in \cite{betsumiya-harada}. Therefore, in particular,
we only need the optimal binary quasi-cyclic codes to construct (self-dual) optimal $\Theta_S-$cyclic
codes over $A_k$ using the result in Proposition \ref{generator} or the algorithm described in Section \ref{algorithm}. For example,
we can use $\Theta-$cyclic codes over $A_1$ in
\cite{Nuh} to produce the optimal $\Theta_S-$cyclic codes over $A_k$ for $k\geq 2$, as in Table \ref{tableoptimal}.

\begin{longtable}{|c|c|c|c|c|c|}\hline\hline  \label{tableoptimal}
$n$ & $d$ & Generator polynomial & $A_k$ & S & T\\[2mm]\hline\hline
4 & 2 & $x^2+1$ & $A_2$ & $\{1,2\}$  & $\{1\}$\\ \hline
6 & 2 & $x^3+1$ & $A_7$ & $\{1,3,5,7\}$  & $\{1\}$\\ \hline
8 & 4 & $x^4+(v_1+1)x^3+x^2+v_1x+1$ & $A_2$ & $\{1,2\}$  & $\emptyset$\\ \hline
10 & 2 & $x^5+1$ & $A_3$ & $\{1,2\}$  & $\emptyset$ \\ \hline
12 & 4 & $x^6+(v_1+1)x^5+x^4+x^3+x^2+v_1x+1$ & $A_3$ & $\{1,3\}$   & $\emptyset$ \\ \hline
14 & 4 & $x^7+x^6+x^5+x^4+x+1$ & $A_4$ & $\{1,2,3\}$ & $\{1\}$ \\ \hline
16 & 4 & $x^{8}+(v_1+1)x^{5}+x^4+v_1x^3+1$ & $A_4$ & $\{1,2,3\}$  & $\{1\}$ \\ \hline
18 & 4 & $x^9+x^7+v_1x^6+(1+v_1)x^5+(1+v_1)x^4+v_1x^3+x^2+1$ & $A_2$ & \{1,2\}  & $\{1\}$ \\ \hline
20 & 4 & $x^10+(v_1+1)x^7+x^6+x^5+x^4+v_1x^3+1$ & $A_6$ & $\{1,2,6\}$   & $\emptyset$ \\ \hline
22 & 6 &\begin{tabular}{l}
          $x^{11}+x^{10}+v_1x^9+v_1x^8+v_1x^7+(v_1+1)x^6$\\
	  $ +(v_1+1)x^5+v_1x^4+v_1x^3+v_1x^2+x+1$
         \end{tabular}
 & $A_3$ & $\{1,2\}$  & $\{1\}$ \\ \hline
24 & 8 & \begin{tabular}{l}
          $x^{12}+x^{11}+v_1x^{10}+x^9+(v_1+1)x^7+v_1x^5$\\
	  $ +x^3+(v_1+1)x^2+x+1$
         \end{tabular} & $A_5$ & $\{1,3,4,5\}$   & $\emptyset$\\ \hline
26 & 6 & \begin{tabular}{l}
          $x^{13}+x^{11}+v_1x^{10}+(1+v_1)x^9+v_1x^8+v_1x^7+(1+v_1)x^6$\\
	  $ +(1+v_1)x^5+v_1x^4+(1+v_1)x^3+x^2+1$
         \end{tabular} & $A_2$ & $\{1,2\}$  & $\{1\}$\\ \hline
\caption{Table of examples of optimal self-dual $\Theta_S-$cyclic codes}
\end{longtable}


\section*{Acknowledgement}
A part of the work of D.S.  was done
when he visited Research Center for Pure and Applied Mathematics (RCPAM), Graduate School of
Information Sciences, Tohoku University Japan, on June 2015 under the financial support from
\emph{Riset Desentralisasi ITB-Dikti 2015} (Number 311c/I1.C01/PL/2015).
A.B., I., and I.M.-A. were supported in part by \emph{Riset dan Inovasi KK ITB tahun 2014}.

\end{document}